%% file: main.tex
\documentclass[conference,a4paper,10pt]{IEEEtran}

\usepackage{silence}
\usepackage[left=1.52cm,right=1.5cm,top=1.9cm,columnsep=0.63cm]{geometry}

\usepackage{etoolbox}
\patchcmd{\thebibliography}
  {\setlength{\itemsep}{0pt}}
  {\setlength{\itemsep}{-2pt}\setlength{\parsep}{0pt}\setlength{\topsep}{2pt}}
  {}{}

\BeforeBeginEnvironment{theorem}{\vspace{-5pt}}
\AfterEndEnvironment{theorem}{\vspace{-5pt}}
\BeforeBeginEnvironment{lemma}{\vspace{-5pt}}
\AfterEndEnvironment{lemma}{\vspace{-5pt}}
\BeforeBeginEnvironment{corollary}{\vspace{-5pt}}
\AfterEndEnvironment{corollary}{\vspace{-5pt}}
\BeforeBeginEnvironment{proof}{\vspace{-5pt}}
\AfterEndEnvironment{proof}{\vspace{-5pt}}

\usepackage{enumitem}
\setlist{nosep, topsep=1pt, itemsep=0pt, parsep=0pt}

\usepackage{cite}

\usepackage{amsmath,amssymb,amsthm,fixmath}

\usepackage{color,colortbl}
\usepackage{xcolor}

\usepackage{siunitx}
\DeclareSIUnit{\dBm}{dBm}

\usepackage{graphicx}
\usepackage[labelformat=simple]{subcaption}

\usepackage{epstopdf}
\usepackage{cuted}
\usepackage{multirow,booktabs}
\usepackage{diagbox}
\usepackage{makecell}
\usepackage{hhline}
\usepackage{array} 
\newcolumntype{x}{!{\vrule width 2px}}
\newcolumntype{y}{!{\vrule width 1.5px}}

\usepackage{optidef}

\usepackage[shortcuts,acronym,automake]{glossaries}
\input{glossary}

\usepackage{tcolorbox}
\tcbuselibrary{many}
\newtheorem{theorem}{Theorem}
\newtheorem{lemma}{Lemma}

\usepackage[norelsize,linesnumbered,ruled]{algorithm2e}
\SetKwRepeat{Do}{do}{while}
\makeatletter
\newcommand{\removelatexerror} {\let\@latex@error\@gobble}
\makeatother

\usepackage[normalem]{ulem}
\newcommand{\revise}[2]{{\color{red}\sout{#1}}{\color{blue}#2}}

\usepackage[textsize=tiny,colorinlistoftodos]{todonotes}
\makeatletter
\define@key{todonotes}{bh}[]{
	\setkeys{todonotes}{author=\textbf{Bin}, color=lime!30}}%
\define@key{todonotes}{wc}[]{
	\setkeys{todonotes}{author=\textbf{Wenwen}, color=green!30}}%
\makeatother

\newif\ifreviewmode
\reviewmodefalse

\ifreviewmode
\else
  \renewcommand{\todo}[1]{} 
  \renewcommand{\revise}[2]{#2} 
\fi

\begin{document}

\title{Confusion--Erasure Bounds of Error-Bounded Decoders under QAM}

\author{
	
	\IEEEauthorblockN{
        Wenwen~Chen\IEEEauthorrefmark{1},
		Bin~Han\IEEEauthorrefmark{1},
and~Hans~D.~Schotten\IEEEauthorrefmark{1}\IEEEauthorrefmark{3}
	}
	
	\IEEEauthorblockA{
		\IEEEauthorrefmark{1}RPTU University Kaiserslautern-Landau, 
		\IEEEauthorrefmark{3}German Research Center for Artificial Intelligence (DFKI)
	}
}


\maketitle

\begin{abstract}
6G is expected to push \ac{urllc} toward stringent residual-error targets for mission-critical services, where undetected errors and erasures carry fundamentally different costs.
\Ac{bler} conflates block confusions (undetected errors)
and block erasures, which have fundamentally different impacts on
system reliability. This paper extends the confusion and erasure
analysis of error-bounded decoders to square \ac{qam} constellations in the \ac{fbl} regime.
\revise{By introducing a per-symbol squared-distance coefficient that
accounts for the heterogeneous energy structure of \ac{qam}, analytical
upper and lower bounds on the block confusion rate are derived.}{To handle \ac{qam}'s heterogeneous symbol energies---which make the per-pair Euclidean distance a distribution rather than a single value---we derive analytical lower and upper bounds on the block confusion rate by, respectively, collapsing this distribution to its \ac{rms} distance and averaging the pairwise confusion over it.} These
bounds are proven to be monotonically decreasing in both the average
symbol energy and the blocklength, with the decrease rate governed by the
constellation order. Numerical results confirm that \revise{confusion rates remain many orders of magnitude below the \ac{bler} constraint, 
especially at lower coding rates}{as the \ac{snr} or redundancy increases, the confusion rate falls many orders of magnitude below the reliability target, 
leaving detectable erasures as the dominant residual error.}
\end{abstract}

\begin{IEEEkeywords}
Finite blocklength, block confusion, QAM, error-bounded decoder
\end{IEEEkeywords}

\IEEEpeerreviewmaketitle

\glsresetall

\section{Introduction}\label{sec:introduction}

As 6G moves beyond 5G's reliability targets, supporting latency- and reliability-sensitive services such as \revise{vehicular control,
industrial automation, and \ac{isac}}{vehicular control and industrial automation} at short blocklengths, the accurate characterization of \ac{phy}-layer decoding failures becomes essential.
\Ac{bler} has long served as the primary performance
metric for digital communication systems, particularly in the context 
of \ac{urllc} in the \ac{fbl} regime~\cite{polyanskiy2010, durisi2016}. However, 
\ac{bler} conflates two fundamentally distinct error types: \emph{block 
erasures}, in which the receiver detects the failure and can request retransmission, and 
\emph{block confusions} (undetected errors), in which the decoder 
silently commits to an incorrect codeword. This distinction is critical in safety-critical applications such as vehicular communications and avionics, 
where undetected control-command errors can be catastrophic, yet rigorous information-theoretic analysis of block confusion in the \ac{fbl} 
regime remains largely absent.

The theoretical foundations of confusion and erasure were established 
by Forney~\cite{forney1968exponential}, with subsequent refinements by Merhav 
\cite{merhav2008} and Somekh-Baruch and Merhav~\cite{somekh2011exact}. 
These results, however, characterize asymptotic exponential decay rates 
and do not directly apply to the \ac{fbl} setting. Systematic \ac{fbl} analysis 
emerged with Polyanskiy et al.~\cite{polyanskiy2010}, and has since 
been extended to fading and multi-antenna channels~\cite{yang2014, 
collins2018coherent}, as well as cooperative and \ac{noma} systems~\cite{hu2015capacity, 
xiang2020}. Erasure channels have been studied extensively for \ac{bec} 
\cite{di2002finite, dana2006}, $q$-ary erasure channels~\cite{liva2013}, 
and block erasure channels~\cite{i2006coding, didier2006new}. Despite 
this, the distinction between block confusion and block erasure in the 
\ac{fbl} regime has not been rigorously analyzed, and cross-layer protocols 
universally model PHY failures as erasures — an assumption lacking 
rigorous PHY-layer validation.

This gap was addressed in our prior work~\cite{han2025confusions}, which derived the first analytical bounds on block confusion and erasure probabilities for error-bounded decoders in the \ac{fbl} regime, confirming that block confusion rates fall orders of magnitude below the \ac{bler} constraint. However, that analysis was restricted to \ac{psk} constellations, where uniform symbol energies place all codewords on a single hypersphere. This excludes $M$-\ac{qam}---the dominant modulation in 5G NR and Wi-Fi 6/7---whose heterogeneous symbol energies distribute codewords across distinct concentric hyperspheres, fundamentally altering the geometric structure of the analysis.

    In this paper, we extend~\cite{han2025confusions} to square $M$-\ac{qam}
constellations. The principal contributions are as follows: 1) \revise{We introduce the per-symbol squared-distance coefficient
    $\Delta$ to characterize the Euclidean-Hamming distance relationship
    for \ac{qam} codebooks, and derive analytical upper and lower bounds on
    the block confusion rate $P_{\mathrm{con}}$.}{Unlike the single-hypersphere 
    \ac{psk} case of \cite{han2025confusions}, \ac{qam} symbols take multiple 
    inter-symbol distances, so the Euclidean distance of a Hamming-distance-$d$ 
    codeword pair is not a single value. To handle this, we derive two bounds on 
    the block confusion rate $P_{\mathrm{con}}$: the lower bound collapses the 
    distribution to its \ac{rms} distance $\sqrt{\Delta d}$, whereas the upper bound 
    averages the pairwise confusion over the full per-shell distribution, 
    giving a tight estimate that avoids the loose worst-case nearest-neighbour 
    approximation.}
2) We prove that both bounds are monotonically decreasing in
    the average symbol energy $\bar{E}_s$ and the blocklength $n$, with
    $P_{\mathrm{con}}^{\mathrm{LB}}$ being additionally convex in
    $\bar{E}_s$.
    3) Numerical results confirm that \revise{the confusion rate bounds remain many orders of magnitude below the BLER constraint $\varepsilon$, 
    especially at lower coding rates,}{as the \ac{snr} or redundancy increases, the confusion rate bounds fall many orders of magnitude below the 
    target $\varepsilon$, so undetected confusion becomes a vanishing fraction of $\varepsilon$, leaving the remaining failures as erasures detectable 
    by the channel code's inherent error-detection capability, such as the parity-check syndrome of \ac{ldpc}, BCH, and Reed--Solomon codes.} 
    

\section{Error-bounded Decoder} \label{sec:system}


\revise{}{Since complete decoders such as \ac{ml} and \ac{map} always output a codeword and cannot detect errors, we consider only incomplete 
decoders in this paper.} We therefore adopt an $\varepsilon$-bounded decoder whose non-overlapping decision regions reject list outputs. 
The decision region radius
should satisfy $R\leqslant D_{\min}/2$ so that list-decoding does not exist.

Consider a codebook $\mathcal{X}$ over the $M$-\ac{qam} alphabet $\mathcal{M}$, with block length $n = k + r$ ($k$ payload, $r$ redundancy). For a transmitted codeword $\mathbf{x} \in \mathcal{X}$ over an AWGN channel with $\mathbf{w} \sim \mathcal{CN}(\mathbf{0},2\sigma^2 I_n)$, decoding the received $\mathbf{y} = \mathbf{x} + \mathbf{w}$ yields one of three outcomes: 1) \emph{Correct decoding:} $\mathbf{y}$ falls within the decision region of $\mathbf{x}$. 2) \emph{Erasure:} $\mathbf{y}$ falls outside all decision regions. 3) \emph{Confusion:} $\mathbf{y}$ falls within the decision region of $\mathbf{x}' \neq \mathbf{x}$.

For a spherical codebook $\mathcal{X} \subset \mathcal{M}^n$ under \ac{qam} modulation, codewords exhibit varying energy levels owing to the non-uniform symbol energies inherent to the constellation. Consequently, the codewords reside on distinct hyperspheres of radii $\sqrt{E_l}$.


For an arbitrary codeword $\mathbf{x} \in \mathcal{X}$ transmitted over the 
noisy channel, let $\mathcal{V}(\mathcal{X}, \mathbf{x}, d)$ denote the 
set of codewords in $\mathcal{X}$ at Hamming distance $d$ from $\mathbf{x}$:
\begin{equation}
    \mathcal{V}(\mathcal{X}, \mathbf{x}, d) \triangleq 
    \left\{ \mathbf{x}' \in \mathcal{X} \mid d_H(\mathbf{x}, \mathbf{x}') 
    = d \right\},
\end{equation}
and therewith the total set of codewords in $\mathcal{X}$ other than 
$\mathbf{x}$:
\begin{equation}
    \mathcal{V}_{\Sigma}(\mathcal{X}, \mathbf{x}) = 
    \bigcup_{d > 0} \mathcal{V}(\mathcal{X}, \mathbf{x}, d).
\end{equation}

Consider the decision region centered at $\mathbf{x}$ and with another codeword $\mathbf{x}' \in \mathcal{V}_{\sum}(\mathcal{X},\mathbf{x})$ on arbitrary surface, as shown in Fig. \ref{fig:energy_shells}

\begin{figure}[!htpb]
    \centering
    \includegraphics[width=0.58\linewidth]{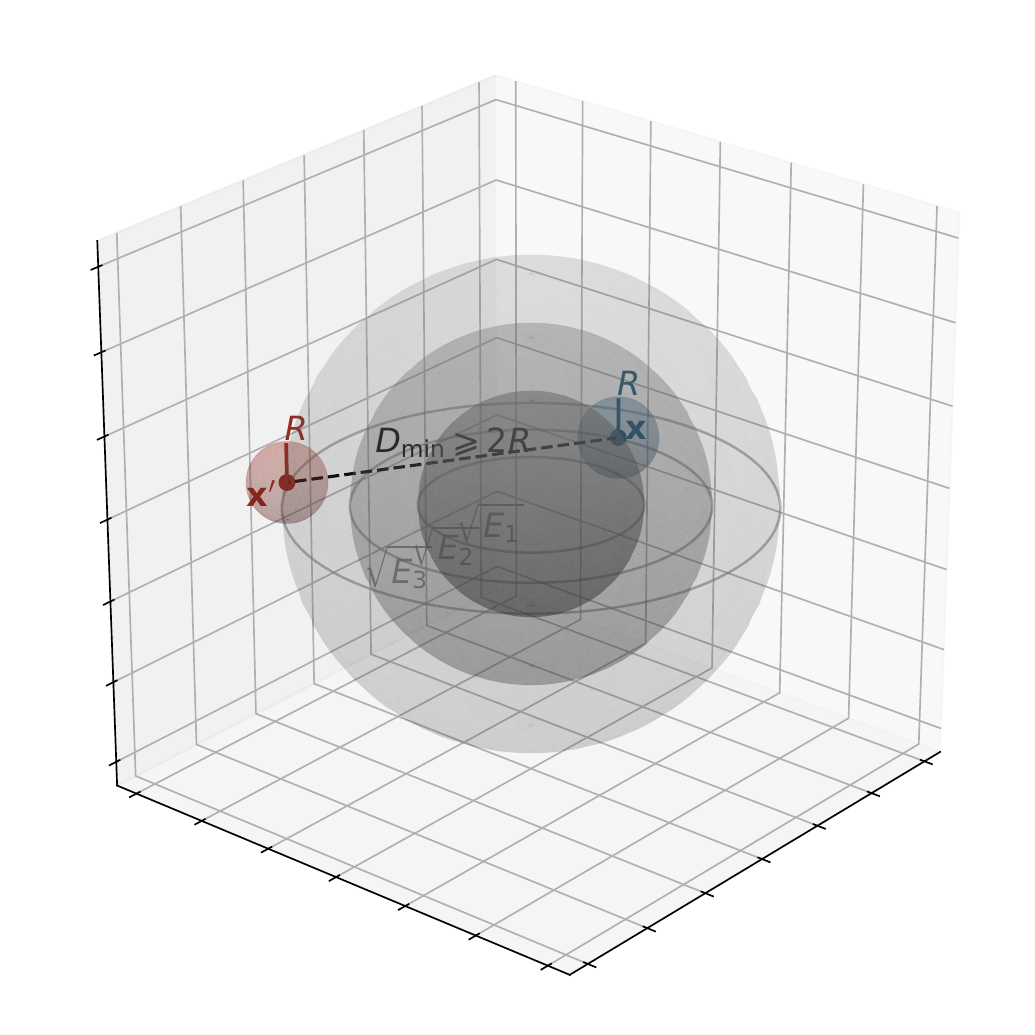}
    \caption{Geometric illustration of the decision balls of radius $R$ centered 
at two codewords $\mathbf{x}$ and $\mathbf{x}'$.}
    \label{fig:energy_shells}
\end{figure}

By definition, $\varepsilon = P(\|\mathbf{w}\|\geqslant R)$ encompasses both erasure and confusion. The decision radius is thus:
\begin{align}
    R(\varepsilon)=F^{-1}_{\left \| \mathbf{w} \right \| }(1-\varepsilon)=\sigma F^{-1}_{\chi_{2n}}(1-\varepsilon),
\end{align}
where $F_{\chi_{2n}}(x)=P(n;x^2/2)$ is the \ac{cdf} of $\chi$ distribution with $2n$ degrees of freedom and $P(s,x)$ is the regularized gamma function:
\begin{align}
    P(s,x)=\frac{\gamma(s,x)}{\Gamma(s)}=\frac{\int_0^x t^{s-1}e^{-t}\mathrm{d}t }{\int_0^{\infty} t^{s-1}e^{-t}\mathrm{d}t }.
\end{align}
\revise{}{As $R(\varepsilon)$ depends only on $\sigma$, $n$, and $\varepsilon$---not on the codeword---the same radius applies to every codeword.}

The pairwise confusion rate is:
\begin{align}
\label{eq:confusion_rate}
    P(\mathbf{x}\!\rightarrow \!\mathbf{x}')\triangleq P(\hat{\mathbf{x}}=\mathbf{x}'|\mathbf{x})=P(\|\mathbf{x}+\mathbf{w}-\mathbf{x}'\|\leqslant R),
\end{align}
i.e., the probability that $\mathbf{w}$ falls in $B_R(\mathbf{x}'-\mathbf{x})$. Conditioning on $\|\mathbf{w}\| = w$, the fraction of the sphere $B_w(\mathbf{0})$ inside $B_R(\mathbf{x}'-\mathbf{x})$ is:
\begin{align}
    P\!\left[\mathbf{w} \! \in \! B_R(\mathbf{x}' \!- \!\mathbf{x}) \!\mid\! \|\mathbf{w}\| = w\right]
   \! =\!\Omega_{2n}\!\left[\theta_w\!\left(\|\mathbf{x}'\!-\!\mathbf{x}\|,R\right)\right],
\end{align}
where $\theta_w$ is the cap angle and $\Omega_{2n}$ is the angle fraction:
\begin{align}
    \theta_w(w,D, R) &= \arccos\!\left(\frac{w^2 + D^2 - R^2}{2wD}\right), \\
    \Omega_{2n}(\theta) &= \frac{\displaystyle\int_0^{\theta} (\sin\phi)^{2n-2} \, d\phi}{\displaystyle\int_0^{\pi} (\sin\phi)^{2n-2} \, d\phi}.
\end{align}

Thus $P(\mathbf{x} \to \mathbf{x}') = P_{\mathrm{pair}}(\|\mathbf{x}' - \mathbf{x}\|)$ depends only on the Euclidean distance. Since $\|\mathbf{w}\|/\sigma \sim \chi_{2n}$:
\begin{equation}
\label{eq:P_pair}
    P_{\mathrm{pair}}(D) = \int_{(D-R)/\sigma}^{(D+R)/\sigma} 
    f_{\chi_{2n}}(u)\, \Omega_{2n}\left(\theta_w(\sigma u, D,R)\right)\, \mathrm{d}u.
\end{equation}

Under the assumption that codewords are uniformly distributed, the overall block confusion rate can be expressed as:
\begin{align}
\label{eq:P_con}
\begin{split}
    P_{\mathrm{con}} 
    & = \sum_{\mathbf{x} \in \mathcal{X}} 
      \left[ P(\mathbf{x}) 
        \sum_{\mathbf{x}' \in \mathcal{V}_\Sigma(\mathcal{X},\, \mathbf{x})} 
        P(\mathbf{x} \to \mathbf{x}') 
      \right] \\
   & = \frac{1}{|\mathcal{X}|} 
      \sum_{\substack{\mathbf{x} \in \mathcal{X} \\ 
                      \mathbf{x}' \in \mathcal{V}_\Sigma(\mathcal{X},\,\mathbf{x})}} 
      P(\mathbf{x} \to \mathbf{x}').
      \end{split}
\end{align}
Thus, the block erasure rate $P_{\mathrm{ers}}=\varepsilon-P_{\mathrm{con}}$, which implies that the upper bound on the confusion rate is the lower bound on the erasure rate, and vice versa.

In this work, we want to derive the minimum achievable erasure probability $P_{\mathrm{ers}}$ given fixed $(\varepsilon, \bar{E}_s, \sigma^2)$. In the following analysis, we only consider the case $|\mathcal{X}| = M^k$, corresponding to full utilization of the payload code space.

\section{Bounds Analysis} \label{sec:bounds}
It follows from~\eqref{eq:P_pair} that the pairwise confusion rate is determined by the Euclidean distance $D(\mathbf{x},\mathbf{x'})=\|\mathbf{x}' - \mathbf{x}\|$ between
codewords. We therefore proceed to establish a bound on $\|\mathbf{x}' - \mathbf{x}\|$.

\subsection{Bounds of Minimum Euclidean Distance}
The minimum Euclidean distance $D_{\min}$ is fundamentally related to the minimum Hamming distance $d_{H,\min}$ imposed by the Hamming bound, which motivates us to find the relationship between the two quantities. \revise{However, deriving a closed-form expression proves intractable due to the phase discrepancies between constellation symbols. We therefore shift our focus to the relationship between the expectations of Euclidean distance and Hamming distance.}
{Expanding the squared Euclidean distance gives $D(\mathbf{x},\mathbf{x}')^2=\sum_{i:x_i\ne x_i'}\big(|x_i|^2+|x_i'|^2-2|x_i|\,|x_i'|\cos(\angle x_i-\angle x_i')\big)$, 
where $\angle x_i$ is the phase of the complex symbol $x_i$ and $2|x_i||x_i'|\cos(\angle x_i-\angle x_i')=2\,\mathrm{Re}(x_ix_i'^*)$. 
The Euclidean distance is thus determined not only by how many symbols differ, but also by their magnitudes and phase differences.
Thus, $D$ is not a function of $d_H(\mathbf{x},\mathbf{x}')$ alone and admits no closed form. We therefore characterize it in expectation:}
\begin{align}
\label{eq:ED2}
\begin{split}
    \mathbb{E}\!\left[D^2\right]
    &=\!\!\sum_{i:x_i \ne x_i'}\!\!\mathbb{E}\!\left[\vert x_i\vert^2+\vert x_i'\vert^2-2\,\text{Re}(x_ix_i'^*)\right] \\
    &=\!\!\sum_{i:x_i \ne x_i'}\!\! 2\bar{E}_s\,(1-\bar{\eta}),
  \end{split}
\end{align}
\revise{}{where $\bar{E}_s=\mathbb{E}[\vert x_i\vert^2]$
is the average symbol energy and
$\bar{\eta}\triangleq\mathbb{E}[\text{Re}(x_ix_i'^*)\mid x_i\ne x_i']/\bar{E}_s$
is the normalized average cross-correlation between two distinct symbols,
evaluated under the symbol-pair distribution induced by the codebook. Defining
the per-symbol squared-distance coefficient as}
\begin{align}
\label{eq:Delta}
    \Delta \triangleq 2\bar{E}_s\,(1-\bar{\eta}),
\end{align}
\revise{}{so that conditioned on $d_H(\mathbf{x},\mathbf{x}')=d$
the squared Euclidean distance and the Hamming distance are related by}
\begin{align}
\label{eq:relation_D_d}
    \mathbb{E}\!\left[D^2\mid d_H(\mathbf{x},\mathbf{x}')=d\right]=\Delta\, d .
\end{align}
\revise{}{Since $\bar{\eta}$ and $\Delta$ depend only on the constellation and its symbol-usage distribution, 
\eqref{eq:relation_D_d} holds for any codebook, once these are fixed, $\Delta$ is a constant and the 
analysis applies unchanged. In this work, we assume the symbols are used uniformly, 
so $\bar{E}_s=\tfrac1M\sum_{x\in\mathcal{M}}\vert x\vert^2$ and $\bar{\eta}=-1/(M-1)$.}
\revise{}{We emphasize that \eqref{eq:relation_D_d} is an \emph{ensemble} identity rather
than a per-pair equality: for a fixed codeword pair, $D$
also depends on the symbol phases and thus fluctuates around its conditional
mean. Accordingly, we define the \emph{\ac{rms} distance}
at Hamming distance $d$ as}
\begin{align}
\label{eq:eff_distance}
    D_{\mathrm{rms}}(d) \triangleq \sqrt{\Delta\, d}
    = \sqrt{\mathbb{E}\!\left[D^2\mid d\right]},
\end{align}
\revise{}{which serves as the representative distance in the bounds below. Its use is
rigorously justified by Lemma~\ref{lem:effdist}.}

\revise{}{Since $P_{\mathrm{pair}}$ decreases with the Euclidean distance, which is tied to the Hamming distance via $\Delta$ in~\eqref{eq:relation_D_d}, bounding $P_{\mathrm{con}}$ reduces to bounding the minimum Hamming distance $d_{H,\min}$. We therefore first bound $d_{H,\min}$ between a feasibility floor from the non-overlapping decision regions and the Hamming bound:}
\begin{lemma}
\label{lem:dmax}
    Given fixed $(M,n,k)$ and an $\varepsilon$-bounded decoder with decision radius $R(\varepsilon)$, the minimum Hamming distance $d_{H,\min}$ of a codebook $\mathcal{X}$ of size $M^k$ satisfies $d_{H,\min}^{\mathrm{LB}}\leqslant d_{H,\min}\leqslant d_{H,\min}^{\mathrm{UB}}$, where $D_{s,\max}^2\triangleq\max_{x\neq x'\in\mathcal{M}}|x-x'|^2$ is the maximum squared symbol distance,
    \begin{align}
    \label{eq:d_min^min}
        d_{H,\min}^{\mathrm{LB}} &= \left\lceil\frac{4R^2(\varepsilon)}{D_{s,\max}^2}\right\rceil, \\
    \label{eq:d_min^max}
        d_{H,\min}^{\mathrm{UB}} &= 2\min\bigg\{T\in\mathbb{N}\,\Big|\sum_{j=0}^T\tbinom{n}{j}(M-1)^j>M^{r}\bigg\}.
    \end{align}
\end{lemma}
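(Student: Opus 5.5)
The two inequalities have different origins, so I would prove them separately: the lower bound from the geometry of the decision regions, and the upper bound from a sphere-packing count in Hamming space.

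\emph{Lower bound.} The $\varepsilon$-bounded decoder requires non-overlapping decision balls of radius $R(\varepsilon)$, i.e.\ $R\leqslant D_{\min}/2$, so every pair of distinct codewords satisfies $\|\mathbf{x}-\mathbf{x}'\|^2\geqslant 4R^2(\varepsilon)$.

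1. Take a pair $(\mathbf{x},\mathbf{x}')$ attaining $d_H(\mathbf{x},\mathbf{x}')=d_{H,\min}$.
2. Expand the squared Euclidean distance symbol by symbol. Only the $d_{H,\min}$ coordinates where $x_i\neq x_i'$ contribute, and each contributes at most $D_{s,\max}^2$. Hence
\begin{align}
4R^2(\varepsilon)\leqslant D^2(\mathbf{x},\mathbf{x}')\leqslant d_{H,\min}\,D_{s,\max}^2 .
\end{align}
3. Divide by $D_{s,\max}^2$ and use that $d_{H,\min}$ is an integer to get $d_{H,\min}\geqslant\lceil 4R^2/D_{s,\max}^2\rceil$.

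This step is a deterministic per-pair argument. It does not use the ensemble identity \eqref{eq:relation_D_d}.

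\emph{Upper bound.} I would use the Hamming (sphere-packing) bound over the alphabet $\mathcal{M}$ of size $M$.

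1. Let $t=\lfloor (d_{H,\min}-1)/2\rfloor$. Hamming balls of radius $t$ around distinct codewords are then disjoint in $\mathcal{M}^n$.
2. Each such ball contains $V(t)=\sum_{j=0}^{t}\binom{n}{j}(M-1)^j$ words. Since $|\mathcal{X}|=M^k$ and $n=k+r$, disjointness gives $M^k V(t)\leqslant M^n$, i.e.\ $V(t)\leqslant M^r$.
3. Define $T^\star=\min\{T\in\mathbb{N}: V(T)>M^r\}$, which is the quantity inside \eqref{eq:d_min^max}. Because $V(\cdot)$ is strictly increasing in $T$, the condition $V(t)\leqslant M^r$ forces $t<T^\star$, i.e.\ $t\leqslant T^\star-1$.
4. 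Since $t\geqslant (d_{H,\min}-2)/2$, we get $d_{H,\min}\leqslant 2t+2\leqslant 2T^\star$, which is \eqref{eq:d_min^max}.

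\emph{Expected obstacle.} The only delicate point is the integer bookkeeping in the upper bound: the floor in $t$, and the parity of $d_{H,\min}$. The bound $d_{H,\min}\leqslant 2t+2$ covers both parities (it is loose by one when $d_{H,\min}$ is odd), so the factor $2$ in \eqref{eq:d_min^max} is exactly what this argument yields.

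I would also note that $T^\star$ exists, because $V(n)=M^n>M^r$ whenever $k\geqslant1$. Consistency $d_{H,\min}^{\mathrm{LB}}\leqslant d_{H,\min}^{\mathrm{UB}}$ holds whenever a codebook meeting the non-overlap constraint exists at all, since such a codebook's $d_{H,\min}$ lies between the two bounds. Beyond this, the lemma does not need to prove consistency as a separate claim.
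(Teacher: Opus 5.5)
Your proposal is correct and follows the same route as the paper. For the lower bound, the paper likewise combines the non-overlap condition $D\geqslant 2R(\varepsilon)$ with the per-coordinate bound $D^2\leqslant d_{H,\min}D_{s,\max}^2$ for a pair at minimum Hamming distance; for the upper bound, it invokes the $M$-ary Hamming (sphere-packing) bound and defers the details to its prior work. Your floor and parity bookkeeping, which yields $d_{H,\min}\leqslant 2T^\star$, is a correct explicit filling-in of that deferred step.
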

\begin{proof}
\revise{}{From the condition of non-overlapping decision regions $D\geqslant 2R(\varepsilon)$,
the pair at Hamming distance $d_{H,\min}$ with largest Euclidean distance should satisfy $\sqrt{d_{H,\min}\,D_{s,\max}^2} \geqslant 2R(\varepsilon)$,
Thus, $d_{H,\min}\geqslant\lceil 4R^2(\varepsilon)/D_{s,\max}^2\rceil$.} 
The upper bound follows from the Hamming bound and the proof is the same as that in~\cite{han2025confusions} Lemma 1.
\end{proof}

\subsection{Bound of Block Confusion Rate}
Then we can derive the monotonicity of $P_\text{pair}$ w.r.t $D$:

\begin{lemma}
\label{lem:lemma2}
    Given fixed $(\varepsilon, \sigma)$, $P_{\mathrm{pair}}$
is monotonically decreasing and convex in $D$ for $D \geqslant 2R(\varepsilon)$.
\end{lemma}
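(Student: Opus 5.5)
The plan is to avoid differentiating the cap-angle integral \eqref{eq:P_pair} directly. Instead I would write $P_{\mathrm{pair}}$ as a Gaussian integral over a fixed ball and differentiate under the integral sign. By rotational invariance of $\mathbf{w}$, set $\mathbf{x}'-\mathbf{x}=D\mathbf{e}_1$ in $\mathbb{R}^{2n}$. Each real component of $\mathbf{w}$ has variance $\sigma^2$, so the density is $f(\mathbf{u})=(2\pi\sigma^2)^{-n}\exp(-\|\mathbf{u}\|^2/(2\sigma^2))$. Substituting $\mathbf{u}=D\mathbf{e}_1+\mathbf{z}$ gives
\begin{equation*}
P_{\mathrm{pair}}(D)=\int_{\|\mathbf{z}\|\leqslant R} (2\pi\sigma^2)^{-n}\exp\!\left(-\frac{(D+z_1)^2+\|\mathbf{z}\|^2-z_1^2}{2\sigma^2}\right)\mathrm{d}\mathbf{z}.
\end{equation*}
This is equivalent to \eqref{eq:P_pair}. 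In this form the integration domain does not depend on $D$, and only the integrand does. The integrand is smooth and bounded, and the domain is compact, so differentiating twice under the integral sign is justified.

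\textbf{Monotonicity.} Differentiating once gives
\begin{equation*}
\frac{\mathrm{d}P_{\mathrm{pair}}}{\mathrm{d}D}=-\int_{\|\mathbf{z}\|\leqslant R}\frac{D+z_1}{\sigma^2}\,f(D\mathbf{e}_1+\mathbf{z})\,\mathrm{d}\mathbf{z}.
\end{equation*}
On the ball $|z_1|\leqslant R$ holds, so $D+z_1\geqslant D-R\geqslant R>0$ whenever $D\geqslant 2R(\varepsilon)$. The integrand is therefore strictly positive, and the derivative is strictly negative.

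\textbf{Convexity.} Differentiating again gives
\begin{equation*}
\frac{\mathrm{d}^2P_{\mathrm{pair}}}{\mathrm{d}D^2}=\int_{\|\mathbf{z}\|\leqslant R}\left(\frac{(D+z_1)^2}{\sigma^4}-\frac{1}{\sigma^2}\right)f(D\mathbf{e}_1+\mathbf{z})\,\mathrm{d}\mathbf{z}.
\end{equation*}
A sufficient condition for this to be nonnegative is $(D+z_1)^2\geqslant\sigma^2$ pointwise on the ball. Since $D+z_1\geqslant D-R\geqslant R$, this follows if $R(\varepsilon)\geqslant\sigma$.

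The main obstacle is verifying $R(\varepsilon)\geqslant\sigma$, i.e. $F^{-1}_{\chi_{2n}}(1-\varepsilon)\geqslant 1$. I would argue it in three steps.
\begin{itemize}
\item For any target of interest, $\varepsilon\leqslant 1/2$, so $R/\sigma$ is at least the median of $\chi_{2n}$.
\item That median is nondecreasing in $n$. A $\chi_{2n+2}$ variable stochastically dominates a $\chi_{2n}$ variable, because it adds an independent nonnegative squared term under the root.
\item For $n=1$ the median is $\sqrt{2\ln 2}\approx 1.18>1$.
\end{itemize}
Hence $R>\sigma$ for all $n\geqslant 1$, and the pointwise condition holds. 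If one wishes to avoid the assumption $\varepsilon\leqslant 1/2$, the fallback is to state the lemma under the explicit condition $D\geqslant R+\sigma$, which suffices by the same argument.
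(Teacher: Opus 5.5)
Your proof is correct, and it is self-contained where the paper is not: the paper gives no argument here beyond saying the proof is identical to Lemma~2 of the earlier PSK paper, with the noise dimension changed from $n$ to $2n$.

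Recentring so that the domain $\{\|\mathbf{z}\|\leqslant R\}$ is fixed, with all $D$-dependence in the Gaussian density, is a genuinely cleaner route than working from the cap-angle form \eqref{eq:P_pair}. There, both integration limits and $\theta_w$ move with $D$. Moreover, $\partial\cos\theta_w/\partial D=(D^2+R^2-w^2)/(2wD^2)$ changes sign on $[D-R,D+R]$, so nothing is signed pointwise. In your version the integrands have a pointwise sign, and the argument is dimension-free, so the $n\to 2n$ transfer needs no comment.

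More importantly, your route exposes a hypothesis the lemma omits. By your own second-derivative formula, the integrand is strictly negative on the whole ball when $D+R<\sigma$. Hence $P_{\mathrm{pair}}$ is strictly concave at $D=2R$ whenever $3R(\varepsilon)<\sigma$, which happens for $\varepsilon$ close enough to $1$. So the statement as written, with no restriction on $\varepsilon$, is false in general.

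Your condition $R(\varepsilon)\geqslant\sigma$ is secured by $\varepsilon\leqslant 1/2$, via the median $\sqrt{2\ln 2}$ of $\chi_2$ and stochastic monotonicity of $\chi_{2n}$ in the degrees of freedom, or alternatively by the fallback $D\geqslant R+\sigma$. It is therefore a repair of the statement rather than an artefact of your method. It holds comfortably in the paper's regime $\varepsilon=0.05$.
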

\revise{}{The proof is identical to that of~\cite{han2025confusions} Lemma 2.}

\revise{}{The following \ac{rms}-distance bound makes the substitution
of $D_{\mathrm{rms}}(d)$ for $D$ in the confusion-rate bounds rigorous.}
\begin{lemma}
\label{lem:effdist}
For codeword pairs at Hamming distance $d$, the expected pairwise confusion
rate is lower-bounded by its evaluation at the \ac{rms} distance:
\begin{align}
    \mathbb{E}\!\left[P_{\mathrm{pair}}(D)\mid d\right]
    \geqslant P_{\mathrm{pair}}\!\left(D_{\mathrm{rms}}(d)\right)
    = P_{\mathrm{pair}}\!\left(\sqrt{\Delta\, d}\right).
\end{align}
\end{lemma}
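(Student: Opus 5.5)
The plan is a Jensen-type argument in the variable $S \triangleq D^2$ rather than $D$. By \eqref{eq:relation_D_d}, $\mathbb{E}[S\mid d]=\Delta d$, so the natural object is $g(s)\triangleq P_{\mathrm{pair}}(\sqrt{s})$. If $g$ is convex on the relevant range, then
\begin{align}
\mathbb{E}\!\left[P_{\mathrm{pair}}(D)\mid d\right]=\mathbb{E}\!\left[g(S)\mid d\right]\geqslant g\!\left(\mathbb{E}[S\mid d]\right)=P_{\mathrm{pair}}\!\left(\sqrt{\Delta d}\right).
\end{align}
This is exactly the claim. The whole proof therefore reduces to showing that $g$ is convex in $s$ for $s\geqslant 4R^2(\varepsilon)$.

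For the convexity of $g$, I would use the chain rule and rely on Lemma~\ref{lem:lemma2}. Writing $D=\sqrt{s}$, one has
\begin{align}
g''(s)=\frac{1}{4s}P_{\mathrm{pair}}''(D)-\frac{1}{4s^{3/2}}P_{\mathrm{pair}}'(D).
\end{align}
Lemma~\ref{lem:lemma2} gives $P_{\mathrm{pair}}''\geqslant 0$ (convexity) and $P_{\mathrm{pair}}'\leqslant 0$ (monotone decrease) for $D\geqslant 2R(\varepsilon)$. Both terms are then nonnegative, so $g''(s)\geqslant 0$. The underlying point is that a decreasing convex function composed with the concave increasing map $s\mapsto\sqrt{s}$ remains convex. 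This can also be seen without second derivatives: $\sqrt{\cdot}$ is concave, so $\sqrt{\lambda s_1+(1-\lambda)s_2}\geqslant\lambda\sqrt{s_1}+(1-\lambda)\sqrt{s_2}$. Since $P_{\mathrm{pair}}$ is decreasing and then convex,
\begin{align}
g(\lambda s_1+(1-\lambda)s_2)\leqslant P_{\mathrm{pair}}\!\left(\lambda\sqrt{s_1}+(1-\lambda)\sqrt{s_2}\right)\leqslant\lambda g(s_1)+(1-\lambda)g(s_2).
\end{align}
I would present this derivative-free version, since it avoids assuming that $P_{\mathrm{pair}}$ is twice differentiable.

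The main obstacle is the domain. Lemma~\ref{lem:lemma2} only guarantees monotonicity and convexity for $D\geqslant 2R(\varepsilon)$, so Jensen's inequality needs the conditional distribution of $D$ to be supported there. I would argue this from the non-overlapping decision regions assumed in Section~\ref{sec:system}. Every pair of distinct codewords in an admissible codebook satisfies $D\geqslant D_{\min}\geqslant 2R(\varepsilon)$, so $S$ takes values in $[4R^2,\infty)$, which is a convex set. Its mean $\Delta d$ then also lies in this set, and $\sqrt{\Delta d}\geqslant 2R$ follows automatically.

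A caveat is that the expectation in \eqref{eq:relation_D_d} is an ensemble average over symbol-pair statistics, which need not coincide with an actual codebook's pairs. The proof must interpret $\mathbb{E}[\cdot\mid d]$ as the same distribution in both places, so that the mean identity and the support restriction refer to one measure. If some ensemble realizations had $D<2R$, I would fall back on the fact that $P_{\mathrm{pair}}$ on $[0,2R]$ is bounded by $1$ and still decreasing. I would then either extend $g$ by its convex minorant or simply restrict the statement to admissible codebooks, as the paper's setting already does.
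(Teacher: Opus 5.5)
Your argument is correct and is essentially the paper's proof. The paper chains $\mathbb{E}[P_{\mathrm{pair}}(D)\mid d]\geqslant P_{\mathrm{pair}}(\mathbb{E}[D\mid d])\geqslant P_{\mathrm{pair}}(\sqrt{\Delta d})$, using Jensen for the convex $P_{\mathrm{pair}}$ and then $\mathbb{E}[D\mid d]\leqslant\sqrt{\mathbb{E}[D^2\mid d]}$ with monotonicity; this is your convexity of $s\mapsto P_{\mathrm{pair}}(\sqrt{s})$ unrolled into two steps from the same three facts.

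Your explicit support condition $D\geqslant 2R(\varepsilon)$, which the paper leaves implicit, is a genuine improvement, but of your two fallbacks only restricting to admissible pairs (under the same measure that gives $\mathbb{E}[D^2\mid d]=\Delta d$) works. Since $P_{\mathrm{pair}}(\sqrt{s})$ need not be convex on $[0,4R^2(\varepsilon))$, a convex minorant $\breve g\leqslant g$ would only deliver the weaker right-hand side $\breve g(\Delta d)\leqslant g(\Delta d)$.
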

\begin{proof}
By applying Jensen inequality we have
$\mathbb{E}[D\mid d]\leqslant\sqrt{\mathbb{E}[D^2\mid d]}=\sqrt{\Delta d}=D_{\mathrm{rms}}(d)$. From Lemma~\ref{lem:lemma2}, $P_{\mathrm{pair}}$ is non-increasing and convex in
$D$ for $D\geqslant 2R(\varepsilon)$, yielding $P_{\mathrm{pair}}(\mathbb{E}[D\mid d])\geqslant P_{\mathrm{pair}}(D_{\mathrm{rms}}(d))$. Applying Jensen inequality once again, we can obtain
$\mathbb{E}[P_{\mathrm{pair}}(D)\mid d]\geqslant P_{\mathrm{pair}}(\mathbb{E}[D\mid d])$.
Chaining the two inequalities proves the lemma.
\end{proof}
This yields a lower bound on $P_{\mathrm{con}}$:
\begin{theorem}
\label{theo:theorem1}
For fixed $(M, n, k, \bar{E}_s)$, consider any $\varepsilon$-bounded decoder that rejects list outputs. If the codewords of $\mathcal{X}$ are \revise{well-distributed, in the sense that codewords are approximately uniformly spread over}{uniformly distributed over} $\mathcal{M}^n$, then the block confusion rate $P_{\mathrm{con}}$ satisfies:
\begin{equation}
    P_{\mathrm{con}}^{\mathrm{LB}} \! \triangleq \!
    \binom{n}{d_{H,\mathrm{min}}^{\mathrm{UB}}}
    \frac{(M-1)^{d_{H,\mathrm{min}}^{\mathrm{UB}}}}{M^{n-k}}
    P_{\mathrm{pair}}\!\left(\sqrt{\Delta \cdot d_{H,\mathrm{min}}^{\mathrm{UB}}}\right).
\end{equation}
\end{theorem}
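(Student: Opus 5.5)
The plan is to lower-bound $P_{\mathrm{con}}$ in \eqref{eq:P_con} by discarding every pairwise term except those at one Hamming distance, namely $d^\star \triangleq d_{H,\min}^{\mathrm{UB}}$. All summands $P(\mathbf{x}\to\mathbf{x}')$ are nonnegative, so dropping terms can only decrease the sum. This gives
\begin{align}
P_{\mathrm{con}} \geqslant \frac{1}{|\mathcal{X}|}\sum_{\mathbf{x}\in\mathcal{X}}\;\sum_{\mathbf{x}'\in\mathcal{V}(\mathcal{X},\mathbf{x},d^\star)} P_{\mathrm{pair}}\big(\|\mathbf{x}'-\mathbf{x}\|\big)
= \bar{N}(d^\star)\,\mathbb{E}\!\left[P_{\mathrm{pair}}(D)\mid d^\star\right],
\end{align}
where $\bar{N}(d)$ is the average number of codewords at Hamming distance $d$ from a codeword. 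The expectation is the ensemble average over such pairs, which is the same expectation that defines $\Delta$ in \eqref{eq:relation_D_d}. The proof then has two parts: evaluate $\bar{N}(d^\star)$, and bound the expectation.

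\textbf{Counting step.} I will use the uniform-distribution hypothesis. Any word of $\mathcal{M}^n$ has exactly $\binom{n}{d}(M-1)^d$ neighbours at Hamming distance $d$. If the $M^k$ codewords are uniformly spread over $\mathcal{M}^n$, each such neighbour is a codeword with probability $M^k/M^n = M^{-(n-k)}$. Hence
\begin{align}
\bar{N}(d^\star)=\binom{n}{d^\star}\frac{(M-1)^{d^\star}}{M^{n-k}}.
\end{align}

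\textbf{Distance step.} I will apply Lemma~\ref{lem:effdist} directly with $d=d^\star$, which gives $\mathbb{E}[P_{\mathrm{pair}}(D)\mid d^\star]\geqslant P_{\mathrm{pair}}(\sqrt{\Delta d^\star})$. Multiplying the two factors yields exactly $P_{\mathrm{con}}^{\mathrm{LB}}$.

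\textbf{Main obstacle.} The delicate point is the hypotheses of Lemma~\ref{lem:effdist}. That lemma relies on the monotonicity and convexity of $P_{\mathrm{pair}}$ from Lemma~\ref{lem:lemma2}, which hold only for $D\geqslant 2R(\varepsilon)$. I therefore need the relevant distances, and $\mathbb{E}[D\mid d^\star]$, to lie in that region. Lemma~\ref{lem:dmax} helps here: it gives $d^\star\geqslant d_{H,\min}\geqslant d_{H,\min}^{\mathrm{LB}}$, and the non-overlap requirement $R\leqslant D_{\min}/2$ guarantees $D\geqslant 2R$ for every distinct codeword pair. I will make this explicit, since convexity on $[2R,\infty)$ suffices for Jensen once the support lies there.

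A second subtlety is why $d^\star$ is the right distance to keep. The choice is legitimate because any nonnegative subset of terms gives a valid lower bound. Choosing the largest admissible minimum distance, $d_{H,\min}^{\mathrm{UB}}$, is the "best code" case, and I will note this in a remark. I will also note that, without an explicit code, the counting step holds only in the ensemble sense implied by the uniform-distribution assumption.
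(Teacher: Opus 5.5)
Your proposal is correct and follows essentially the same route as the paper: keep only the shell at $d_{H,\min}^{\mathrm{UB}}$, take its population $A_d=\binom{n}{d}(M-1)^{d}/M^{n-k}$ from the uniform-codeword assumption, and apply Lemma~\ref{lem:effdist} to replace $\mathbb{E}[P_{\mathrm{pair}}(D)\mid d]$ by $P_{\mathrm{pair}}(\sqrt{\Delta\, d_{H,\min}^{\mathrm{UB}}})$. You also check explicitly that the pairwise distances lie in $[2R(\varepsilon),\infty)$, where Lemma~\ref{lem:lemma2} gives monotonicity and convexity; the paper leaves this step implicit.
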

\revise{}{The proof follows \cite{han2025confusions} with $E$ replaced by $\Delta$. We lower-bound the confusion rate by keeping only the shell at the maximum achievable minimum distance $d_{H,\min}^{\mathrm{UB}}$---the farthest, least-confusable shell---and dropping the others. Its population $A_d=\binom{n}{d}(M-1)^{d}/M^{n-k}$, the expected number of codewords at Hamming distance $d$, is obtained by assuming the codewords are uniformly distributed over $\mathcal{M}^n$. Within this shell, Lemma~\ref{lem:effdist} further replaces $\mathbb{E}[P_{\mathrm{pair}}(D)\mid d]$ by its smaller \ac{rms}-distance value $P_{\mathrm{pair}}(\sqrt{\Delta\,d_{H,\min}^{\mathrm{UB}}})$.}

\revise{}{However, $\sqrt{\Delta d}$ under-estimates $P_{\mathrm{pair}}$, so the upper bound instead uses the conditional confusion $\mathbb{E}[P_{\mathrm{pair}}(D)\mid d_H=d]$, bracketed by the
\ac{rms} distance from below and the worst-case nearest-neighbour distance from above,
\begin{equation}
\label{eq:Epair_bracket}
    P_{\mathrm{pair}}\!\big(\sqrt{\Delta d}\big)\;\leqslant\;
    \mathbb{E}[P_{\mathrm{pair}}\mid d]\;\leqslant\;
    P_{\mathrm{pair}}\!\big(\sqrt{D_{s,\min}^2d}\big),
\end{equation}
where $D_{s,\min}$ denotes the minimum Euclidean distance between symbols
$D_{s,\min}^2\triangleq\min_{x\neq x'\in\mathcal{M}}|x-x'|^2$.
Conditioned on $d$, the squared distance $D^2=\sum_{j=1}^{d}|x_j-x_j'|^2$ is a sum of $d$ i.i.d.\ per-symbol terms, 
so its law is the $d$-fold convolution of the per-symbol squared-distance distribution, and $\mathbb{E}[P_{\mathrm{pair}}(D)\mid d]$ 
averages $P_{\mathrm{pair}}$ over it. This per-shell distribution is the key \ac{qam}-specific structure: adopting this expectation 
rather than the worst case of all-nearest-neighbour symbols yields the exact ensemble union-bound term.
}
\begin{theorem}
For fixed $(M, k, \bar{E}_s)$, \revise{}{and assuming the $M^k$ codewords are uniformly distributed over the $M$-ary space $\mathcal{M}^n$,} any $\varepsilon$-bounded decoder that rejects list-outputs satisfies:
\begin{equation}
\label{eq:P_con^UB}
    P_{\mathrm{con}}^{\mathrm{UB}} \triangleq
    \sum_{d=d_{H,\mathrm{min}}^{\mathrm{LB}}}^{n}
    \binom{n}{d}\frac{(M-1)^{d}}{M^{n-k}}\,
    \mathbb{E}\!\left[P_{\mathrm{pair}}(D)\mid d\right].
\end{equation}
\end{theorem}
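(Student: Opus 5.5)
The plan is to start from the exact expression \eqref{eq:P_con} and reorganize the double sum over ordered pairs $(\mathbf{x},\mathbf{x}')$ by Hamming distance:
\begin{equation*}
P_{\mathrm{con}}=\frac{1}{|\mathcal{X}|}\sum_{d>0}\sum_{\mathbf{x}\in\mathcal{X}}\sum_{\mathbf{x}'\in\mathcal{V}(\mathcal{X},\mathbf{x},d)}P_{\mathrm{pair}}(\|\mathbf{x}'-\mathbf{x}\|).
\end{equation*}
The first step is to remove the shells with $d<d_{H,\min}^{\mathrm{LB}}$. By Lemma~\ref{lem:dmax}, every codebook admissible for an $\varepsilon$-bounded decoder that rejects list outputs has $d_{H,\min}\geqslant d_{H,\min}^{\mathrm{LB}}$. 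Hence $\mathcal{V}(\mathcal{X},\mathbf{x},d)=\emptyset$ for $d<d_{H,\min}^{\mathrm{LB}}$, and the sum over $d$ starts at $d_{H,\min}^{\mathrm{LB}}$ and ends at $n$. This truncation only drops empty terms, so it loses nothing.

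The second step uses the uniform-distribution assumption. The $M^k$ codewords are uniformly distributed over $\mathcal{M}^n$, so for a fixed $\mathbf{x}$ each of the $\binom{n}{d}(M-1)^d$ words at Hamming distance $d$ is a codeword with probability $M^k/M^n=M^{-(n-k)}$. The expected population of shell $d$ is therefore $A_d=\binom{n}{d}(M-1)^d/M^{n-k}$, the same quantity used in Theorem~\ref{theo:theorem1}.

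The third step handles the Euclidean distances inside each shell. Conditioned on $d$, the differing positions carry symbol pairs $(x_j,x_j')$ with $x_j\ne x_j'$, drawn i.i.d.\ under the same symbol-usage law that defines $\Delta$. Then $D^2=\sum_{j=1}^{d}|x_j-x_j'|^2$ follows the $d$-fold convolution law described before the theorem. Taking the ensemble average of the inner sum and using linearity of expectation gives
\begin{equation*}
\mathbb{E}\Big[\sum_{\mathbf{x}'\in\mathcal{V}(\mathcal{X},\mathbf{x},d)}P_{\mathrm{pair}}(D)\Big]=A_d\,\mathbb{E}[P_{\mathrm{pair}}(D)\mid d].
\end{equation*}
For this factorization to be valid, membership of $\mathbf{x}'$ in the codebook must be independent of which symbol values occupy the differing positions. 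Uniform random placement guarantees this.

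The fourth step averages over $\mathbf{x}$ with the $1/|\mathcal{X}|$ weight, which gives exactly \eqref{eq:P_con^UB}. To justify calling this an upper bound, note that each ordered pair is counted once in its own shell, with no overlap reduction. The union-type counting is therefore at least the exact confusion rate, because the decision regions are disjoint only around codewords; counting every ensemble-expected neighbour never undercounts.

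The main obstacle is the third step: the shell population and the per-shell distance law must decouple. Any dependence between where codewords lie and their symbol content would break the clean product $A_d\,\mathbb{E}[P_{\mathrm{pair}}\mid d]$. I would first try to obtain this independence from a random-coding argument in which each codeword is drawn i.i.d.\ uniformly on $\mathcal{M}^n$. As a safety net, I would note that replacing $\mathbb{E}[P_{\mathrm{pair}}\mid d]$ by the bracket's upper end in \eqref{eq:Epair_bracket} still yields a valid, though looser, bound.
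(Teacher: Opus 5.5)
Your shell decomposition, the population $A_d$, and the per-shell average $\mathbb{E}[P_{\mathrm{pair}}(D)\mid d]$ are the same ingredients the paper uses. The gap is that you put the inequality in the wrong place. In your fourth step you argue that ``union-type counting'' over-counts. But the decoder's balls of radius $R\leqslant D_{\min}/2$ are disjoint, so the confusion events $\{\mathbf{y}\in B_R(\mathbf{x}')\}$ for distinct $\mathbf{x}'$ are disjoint. Hence \eqref{eq:P_con} is an exact sum, not a union bound, and there is no slack to exploit. As written, your chain is a sequence of claimed equalities followed by an inequality that has no source.

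The paper's inequality comes from the very step you declared lossless. First, the code's actual distance spectrum is replaced by its expected value $A_d$; this is an approximation, which is why the paper writes $\approx$. The real code only populates shells $d\geqslant d_{H,\min}$, so $P_{\mathrm{con}}\approx\sum_{d=d_{H,\min}}^{n}A_d\,\mathbb{E}[P_{\mathrm{pair}}(D)\mid d]$. By Lemma~\ref{lem:dmax}, $d_{H,\min}\geqslant d_{H,\min}^{\mathrm{LB}}$, and every added term is nonnegative (indeed $A_d>0$ for all $d\geqslant 1$). Lowering the summation limit to $d_{H,\min}^{\mathrm{LB}}$ therefore over-estimates the sum, and that is the upper bound.

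Your ordering also hides an inconsistency. You drop the shells $d<d_{H,\min}^{\mathrm{LB}}$ because an admissible code leaves them empty. You then average over the unconstrained uniform ensemble, in which those same shells have expected population $A_d>0$. The truncated ensemble sum thus equals neither the real code's sum nor the full ensemble sum over $d\geqslant 1$. So the identity ``$P_{\mathrm{con}}=\sum_{d\geqslant d_{H,\min}^{\mathrm{LB}}}A_d\,\mathbb{E}[P_{\mathrm{pair}}(D)\mid d]$'' that your steps assert holds in neither model.

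The fix is to keep the true $d_{H,\min}$ as the lower limit when you pass to $A_d$. Only after that should you invoke Lemma~\ref{lem:dmax} to extend the sum down to $d_{H,\min}^{\mathrm{LB}}$. Your fallback via the upper end of \eqref{eq:Epair_bracket} does not repair this. It addresses the per-shell factorization, not the missing source of the inequality, and it would bound a looser quantity than the stated $P_{\mathrm{con}}^{\mathrm{UB}}$.
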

\begin{proof}
    \revise{}{By~\eqref{eq:P_con}, the block confusion rate is the average over codeword pairs of the pairwise confusion $P(\mathbf{x}\to\mathbf{x}')=P_{\mathrm{pair}}(D)$. Grouping the pairs by their Hamming distance $d=d_H(\mathbf{x},\mathbf{x}')$ under the uniform-codeword assumption,
\begin{align}
P_{\mathrm{con}}
&=\frac{1}{|\mathcal{X}|}\sum_{\substack{\mathbf{x}\in\mathcal{X}\ \mathbf{x}'\in\mathcal{V}_\Sigma(\mathcal{X},\mathbf{x})}}P(\mathbf{x}\to\mathbf{x}')\nonumber\\
&\approx \sum_{d=d_{H,\min}}^{n} A_d\,\mathbb{E}\big[P_{\mathrm{pair}}(D)\mid d\big] \nonumber\\
&\leqslant \sum_{d=d_{H,\min}^{\mathrm{LB}}}^{n} A_d\,\mathbb{E}\big[P_{\mathrm{pair}}(D)\mid d\big],
\end{align}
where $A_d$ is the expected number of codewords at Hamming distance $d$, as defined in the proof of Theorem~\ref{theo:theorem1}.
The approximation replaces the code's actual distance spectrum by its expected value $A_d$. By extending the summation down to the most conservative floor $d_{H,\min}^{\mathrm{LB}}$, we over-estimate $P_{\mathrm{con}}$ and thus get the upper bound.}
\end{proof}

\section{Sensitivity of Confusion Rate Bounds} \label{sec:sensitivity}
In this section, we examine the sensitivity of the confusion rate bounds 
$P_{\mathrm{con}}^{\mathrm{LB}}$ and $P_{\mathrm{con}}^{\mathrm{UB}}$ 
to variations in the energy and blocklength.

\subsection{Lower Bound versus Energy}
\begin{theorem}
\label{theo:P_con_LB_E_0}
For fixed $(M, k, n, \sigma, \varepsilon)$, the lower bound $P_{\mathrm{con}}^{\mathrm{LB}}$ is monotonically decreasing and convex in the average symbol energy $\bar{E}_s$.
\end{theorem}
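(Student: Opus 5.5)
The plan is to first see which ingredients of $P_{\mathrm{con}}^{\mathrm{LB}}$ depend on $\bar{E}_s$, and then push the energy dependence through the single factor that carries it.

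\textbf{Step 1: isolate the energy dependence.} With $(M,k,n,\sigma,\varepsilon)$ fixed, the decision radius $R(\varepsilon)=\sigma F^{-1}_{\chi_{2n}}(1-\varepsilon)$ does not depend on $\bar{E}_s$. The quantity $d_{H,\min}^{\mathrm{UB}}$ from Lemma~\ref{lem:dmax} is determined by the Hamming bound alone, i.e.\ by $(n,M,r)$. Hence the combinatorial prefactor
\[
C\triangleq\binom{n}{d_{H,\min}^{\mathrm{UB}}}\frac{(M-1)^{d_{H,\min}^{\mathrm{UB}}}}{M^{n-k}}>0
\]
is a constant. The only energy dependence therefore sits in the argument $D(\bar{E}_s)=\sqrt{\Delta\, d_{H,\min}^{\mathrm{UB}}}$. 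By \eqref{eq:Delta} with $\bar{\eta}=-1/(M-1)$ we have $\Delta=2\bar{E}_s M/(M-1)$, and so
\[
D(\bar{E}_s)=a\sqrt{\bar{E}_s},\qquad a\triangleq\sqrt{2M d_{H,\min}^{\mathrm{UB}}/(M-1)}>0 .
\]

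\textbf{Step 2: monotonicity.} $D(\bar{E}_s)$ is strictly increasing in $\bar{E}_s$. Lemma~\ref{lem:lemma2} says $P_{\mathrm{pair}}$ is decreasing in $D$ on the admissible range $D\geqslant 2R(\varepsilon)$. The composition is therefore decreasing, and multiplying by $C>0$ preserves this. I will restrict attention to energies for which $a\sqrt{\bar{E}_s}\geqslant 2R$, which is exactly the non-overlapping regime in which the bound is defined.

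\textbf{Step 3: convexity, the main obstacle.} Here I differentiate twice:
\[
\frac{d^2}{d\bar{E}_s^2}P_{\mathrm{pair}}(a\sqrt{\bar{E}_s})=P_{\mathrm{pair}}''(D)\,\frac{a^2}{4\bar{E}_s}-P_{\mathrm{pair}}'(D)\,\frac{a}{4\bar{E}_s^{3/2}} .
\]
The first term is nonnegative because $P_{\mathrm{pair}}$ is convex in $D$ (Lemma~\ref{lem:lemma2}). The second term is nonnegative because $P_{\mathrm{pair}}'\leqslant 0$ and $\sqrt{\cdot}$ is concave. So the whole expression is $\geqslant 0$.

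Equivalently, without derivatives: a convex non-increasing function composed with a concave function is convex. This sidesteps any differentiability question for $P_{\mathrm{pair}}$, which would otherwise have to be checked from \eqref{eq:P_pair} via Leibniz' rule.

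The only subtle point is domain consistency. Lemma~\ref{lem:lemma2} holds only on $D\geqslant 2R$, and the preimage $\{\bar{E}_s\geqslant 4R^2/a^2\}$ is an interval. On that interval both properties transfer directly, which completes the argument.
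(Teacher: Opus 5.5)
Your proposal is correct and follows essentially the same route as the paper's proof. The paper also factors out the constant $C$, writes the argument as $\beta\sqrt{\bar{E}_s}$ with $\beta=\sqrt{\alpha\, d_{H,\min}^{\mathrm{UB}}}$, and applies the chain rule twice with Lemma~\ref{lem:lemma2} to obtain $f'\leqslant 0$ and $f''\geqslant 0$. Your explicit restriction to $\bar{E}_s\geqslant 4R^2(\varepsilon)/a^2$, which is where Lemma~\ref{lem:lemma2} actually applies, and your derivative-free composition remark are small refinements that the paper leaves implicit.
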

\begin{proof}
    See Appendix \ref{appendix:proof_P_con_LB_E_0}.
\end{proof}

\subsection{Lower Bound versus Blocklength}
\begin{lemma}
\label{lemm:d_min^max_n}
For fixed $(M, k)$, $d_{H,\min}^{\mathrm{UB}}$ is monotonically increasing in $n$.
\end{lemma}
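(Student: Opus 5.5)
We need to show that $d_{H,\min}^{\mathrm{UB}}(n)=2T^*(n)$, with $r=n-k$ and
\[
T^*(n)=\min\Big\{T\in\mathbb{N}\,\Big|\,S_n(T)>M^{n-k}\Big\},\qquad S_n(T)\triangleq\sum_{j=0}^{T}\tbinom{n}{j}(M-1)^j ,
\]
is non-decreasing in $n$ (and increasing in the sense that it eventually grows). The plan is to compare the Hamming-ball volume with the threshold $M^{n-k}$ as $n$ increases by one. Since $T^*(n)$ is the smallest radius at which the ball volume exceeds $M^{r}$, it suffices to show that if $T<T^*(n+1)$ then $T<T^*(n)$. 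Equivalently:
\[
S_{n+1}(T)\leqslant M^{n+1-k}\ \Longrightarrow\ S_n(T)\leqslant M^{n-k}.
\]
The contrapositive is the cleanest route. Suppose $S_n(T)>M^{n-k}$; we want $S_{n+1}(T)>M^{n+1-k}$ for $T=T^*(n)-1$, or more directly we show $S_{n+1}(T)\leqslant M\,S_n(T)$ for every $T$.

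\textbf{Key identity.} Pascal's rule $\binom{n+1}{j}=\binom{n}{j}+\binom{n}{j-1}$ gives
\[
S_{n+1}(T)=S_n(T)+(M-1)S_n(T-1)\leqslant M\,S_n(T).
\]
Combinatorially, a word within distance $T$ of the origin in $\mathcal{M}^{n+1}$ either has last symbol $0$ and its first $n$ coordinates lie in the $n$-ball of radius $T$, or has one of $M-1$ nonzero last symbols and its first $n$ coordinates lie in the radius-$(T-1)$ ball.

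\textbf{Conclusion.} Now let $T<T^*(n)$, so $S_n(T)\leqslant M^{n-k}$. Then $S_{n+1}(T)\leqslant M\,S_n(T)\leqslant M^{n+1-k}$, hence $T<T^*(n+1)$. Therefore $T^*(n+1)\geqslant T^*(n)$, and multiplying by $2$ gives the monotonicity of $d_{H,\min}^{\mathrm{UB}}$.

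\textbf{Strict growth and the main obstacle.} To see that the bound really increases rather than stays flat, note that $M^{n-k}$ grows like $M^{n}$ while for fixed $T$ the volume $S_n(T)$ is only polynomial, of order $n^T$. So $T^*(n)\to\infty$, and the sequence is non-decreasing with unbounded growth. The main subtlety is interpreting ``monotonically increasing'': strict increase at every step fails, since $T^*$ is integer-valued and typically constant over runs of $n$. I would therefore state and prove it as non-decreasing with $\lim_{n\to\infty}d_{H,\min}^{\mathrm{UB}}=\infty$. This is all the later blocklength-sensitivity argument needs, because $P_{\mathrm{con}}^{\mathrm{LB}}$ depends on $n$ through this quantity together with the explicit $n$-dependence handled separately.
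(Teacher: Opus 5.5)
Your proof is correct, and the direction you actually prove is the right one. The paper does not prove this lemma itself. It only says the proof is identical to the PSK case in \cite{han2025confusions}, which is legitimate because $d_{H,\min}^{\mathrm{UB}}$ depends only on the combinatorics of $(M,n,k)$, not on the constellation geometry. So I cannot tell whether your route matches the one in that reference, but your argument works as a self-contained replacement. The recursion $S_{n+1}(T)=S_n(T)+(M-1)S_n(T-1)\leqslant M\,S_n(T)$ shows that each radius $T<T^*(n)$, which by definition fails the Hamming-bound test at length $n$, still fails at length $n+1$. The reason is that the ball volume grows by at most a factor $M$, while the threshold $M^{n-k}$ grows by exactly $M$. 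Your reading of ``monotonically increasing'' as non-decreasing, with $T^*(n)\to\infty$ from polynomial-versus-exponential growth, is the one the paper relies on later, where it speaks of plateaus of the step function $d_{H,\min}^{\mathrm{UB}}(n)$.

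Your opening reduction, however, is stated backwards and should be removed. The claim ``if $T<T^*(n+1)$ then $T<T^*(n)$'', its ``equivalent'' form $S_{n+1}(T)\leqslant M^{n+1-k}\Rightarrow S_n(T)\leqslant M^{n-k}$, and the announced contrapositive would all give $T^*(n+1)\leqslant T^*(n)$. That is the opposite inequality. Combined with what you prove, it would force $T^*$ to be constant in $n$, which fails wherever $T^*$ jumps. The bound $S_{n+1}(T)\leqslant M\,S_n(T)$ could not establish that contrapositive in any case, since it would need a lower bound. What you need, and what your Conclusion paragraph correctly proves, is $S_n(T)\leqslant M^{n-k}\Rightarrow S_{n+1}(T)\leqslant M^{n+1-k}$, i.e.\ $T<T^*(n)\Rightarrow T<T^*(n+1)$. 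Keep that paragraph as your proof.
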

The proof is identical to that of \cite{han2025confusions}.

\begin{lemma}
\label{lem:d_min^max(n)}
For all $n \in [k, M^k - 2]$, the maximum achievable minimum Hamming 
distance satisfies $d_{H,\min}^{\mathrm{UB}}(n) < \frac{(M-1)(n+1)}{M}$.
\end{lemma}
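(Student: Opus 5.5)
Write $T^\star$ for the minimizing integer in \eqref{eq:d_min^max}, so that $d_{H,\min}^{\mathrm{UB}}(n)=2T^\star$. By minimality of $T^\star$, the index $T=T^\star-1$ must fail the inequality, which gives
\begin{equation*}
\sum_{j=0}^{T^\star-1}\tbinom{n}{j}(M-1)^j\leqslant M^{r}.
\end{equation*}
The goal is to show $2T^\star<(M-1)(n+1)/M$. The plan is to argue by contradiction. Suppose $2T^\star\geqslant (M-1)(n+1)/M$. We will then show that the partial sum $\sum_{j\leqslant T^\star-1}\binom{n}{j}(M-1)^j$ already exceeds $M^r$, contradicting the display above.

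The natural tool is the binomial distribution $\mathrm{Bin}(n,(M-1)/M)$. Its probabilities are $\binom{n}{j}(M-1)^j/M^n$, and its mode is $\lfloor (M-1)(n+1)/M\rfloor$. The partial sum above equals $M^n\,\Pr[J\leqslant T^\star-1]$ for $J\sim\mathrm{Bin}(n,(M-1)/M)$.

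Under the contradiction hypothesis, $T^\star-1\geqslant (M-1)(n+1)/(2M)-1$. This is roughly half the mean $n(M-1)/M$, so the event $\{J\leqslant T^\star-1\}$ covers the lower half of the distribution's range below its mean. We need to compare this with $M^{r}=M^{n-k}$, i.e.\ show $\Pr[J\leqslant T^\star-1]>M^{-k}$.

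The key step is a lower bound on a single term. We use the $j=T^\star-1$ term, or more cleanly the term at $j_0=\lceil (M-1)(n+1)/(2M)\rceil-1$. Two ingredients are needed:
\begin{itemize}
\item For $j\leqslant$ mode, the ratio of consecutive terms, $\frac{(n-j)(M-1)}{j+1}$, is at least $1$, so the terms increase up to $j_0$.
\item A crude estimate such as $\binom{n}{j_0}(M-1)^{j_0}\geqslant \bigl(\frac{(M-1)n}{j_0}\bigr)^{j_0}\geqslant (2M)^{j_0}\cdot(\text{const})$ shows the term grows like $M^{c n}$ with $c$ close to $1/2$.
\end{itemize}
Relating this to $M^{n-k}$ is where the range restriction $n\leqslant M^k-2$ enters. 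The equivalent condition $k\geqslant\log_M(n+2)$ means the redundancy deficit $M^{-k}\leqslant 1/(n+2)$. So it suffices to show $\Pr[J\leqslant T^\star-1]>1/(n+2)$. Note that $n+1$ is the number of possible values of $J$, and the mode term has probability at least $1/(n+1)$. Hence if $T^\star-1$ reaches the mode, the sum includes the mode term and exceeds $1/(n+1)>1/(n+2)$, giving the contradiction.

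The main obstacle is the gap between ``$2T^\star$ reaches the mode'' (the hypothesis) and ``$T^\star-1$ reaches the mode'' (what the mode-probability argument needs). The factor $2$ from sphere packing and the $-1$ from minimality must be absorbed.

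I would first try to absorb them by strengthening the probability estimate. Since $d_{H,\min}^{\mathrm{UB}}$ is defined via the packing radius $T$, the hypothesis only gives $T^\star\gtrsim$ half the mode. I would therefore use the Chernoff/tail bound $\Pr[J\leqslant m]\geqslant \frac{1}{n+1}\exp(-n\,D(m/n\,\|\,(M-1)/M))$ together with $M^{-k}\leqslant 1/(n+2)$.

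If this fails, the fallback is a two-case split:
\begin{itemize}
\item For small $n$, near $k$, use $r$ small: $M^r$ is tiny, so $T^\star$ is small directly.
\item For large $n$, use monotonicity from Lemma~\ref{lemm:d_min^max_n} and an induction in $n$ on the ratio of successive partial sums.
\end{itemize}
In either case the argument checks that the strict inequality $d_{H,\min}^{\mathrm{UB}}(n)<(M-1)(n+1)/M$ survives at the endpoint $n=M^k-2$.
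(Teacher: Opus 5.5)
Your binomial reformulation is correct: $T^\star$ is the smallest $T$ with $\Pr[J\le T]>M^{-k}$ for $J\sim\mathrm{Bin}(n,(M-1)/M)$. But that reformulation also shows the plan cannot succeed. If $d_{H,\min}^{\mathrm{UB}}$ is taken literally as the sphere-packing quantity $2T^\star$ of Lemma~\ref{lem:dmax}, the inequality $2T^\star<(M-1)(n+1)/M$ is false over much of the range. At $n=M^k-2$ one has $M^{-k}=1/(n+2)$, so $T^\star$ lies within $O(\sqrt{n\log n})$ of the mean, and $2T^\star\approx 2n(M-1)/M$, about twice the claimed bound. Concretely, for $(M,k,n)=(4,2,14)$ you get $\sum_{j\le 7}\binom{14}{j}3^j=10\,273\,228\le 4^{12}=16\,777\,216<29\,975\,911=\sum_{j\le 8}\binom{14}{j}3^j$. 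Hence $T^\star=8$ and $2T^\star=16>45/4$. Even $(M,k,n)=(4,1,1)$ gives $2>3/2$.

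This is exactly where your contradiction breaks. The hypothesis $2T^\star\ge (M-1)(n+1)/M$ only yields $T^\star-1\ge j_0\approx (M-1)n/(2M)$, so you would need $\Pr[J\le j_0]>M^{-k}$. But $\Pr[J\le j_0]$ is exponentially small in $n$, and your Chernoff-type lower bound is itself exponentially small, whereas $M^{-k}$ can be as large as $1/(n+2)$. Likewise, a single term of order $M^{cn}$ with $c\approx 1/2$ cannot exceed $M^{n-k}$ once $k<(1-c)n$. The factor-$2$ gap you flag as the main obstacle is therefore fatal. Neither the small-$n$/large-$n$ case split nor an induction on the monotonicity of $d_{H,\min}^{\mathrm{UB}}$ in $n$ can close it, since that monotonicity only gives lower bounds on later values.

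The missing idea is to bound the minimum distance of an actual code rather than the sphere-packing radius. The paper's proof treats $d_{H,\min}^{\mathrm{UB}}(n)$ as the largest minimum distance attainable by a code of size $M^k$. It then applies the Plotkin bound, obtained by averaging the Hamming distance over all $\binom{M^k}{2}$ codeword pairs: $d_{H,\min}\le n(M-1)M^{k-1}/(M^k-1)$. The range restriction enters purely algebraically. $n\le M^k-2$ is equivalent to $n/(M^k-1)<(n+1)/M^k$, and multiplying by $(M-1)M^{k-1}$ gives the strict inequality. So $n\le M^k-2$ is not a ``redundancy deficit'' to feed into a tail estimate; it is exactly the condition under which the Plotkin bound falls strictly below $(M-1)(n+1)/M$. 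The lemma is only valid under this reading; with the sphere-packing definition taken literally, the example above is a counterexample. The fix is to drop the Hamming-bound characterization of $T^\star$ and argue via Plotkin.
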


\begin{proof}
See Appendix \ref{appendix:d_min^max(n)}
\end{proof}

\begin{theorem}
\label{theo:P_con^LB_n}
For fixed $(M, k, \bar{E}_s, \sigma, \varepsilon)$, $P_{\mathrm{con}}^{\mathrm{LB}}$ is piecewise monotonically decreasing
in $n$ within each $d_{H,\min}^{\mathrm{UB}}$-consistent interval\revise{}{, i.e. a plateau of the nondecreasing integer 
step function $d_{H,\min}^{\mathrm{UB}}(n)$}, and globally monotonically decreasing for sufficiently large $n$.
\begin{proof}
Compared with \ac{psk} constellations \cite{han2025confusions}, the \ac{qam} model only replaces the per-shell distance $\sqrt{E\,d}$ by $\sqrt{\Delta\,d}$ and the noise dimension $n$ by $2n$, both constant in $n$, so the piecewise and eventual global monotonicity in $n$ carry over unchanged.
\end{proof}
\end{theorem}
\subsection{Upper Bound versus Energy}
\begin{lemma}
For any fixed $(\sigma, n, M, \varepsilon)$, $d_{H,\min}^{\mathrm{LB}}$ is monotonically decreasing in $\bar{E}_s$.
\end{lemma}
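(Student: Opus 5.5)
The claim follows from the closed form of Lemma~\ref{lem:dmax}, $d_{H,\min}^{\mathrm{LB}}=\lceil 4R^2(\varepsilon)/D_{s,\max}^2\rceil$, by tracking how each ingredient depends on $\bar{E}_s$. The numerator $R(\varepsilon)=\sigma F^{-1}_{\chi_{2n}}(1-\varepsilon)$ depends only on $(\sigma,n,\varepsilon)$. We noted this right after the definition of the decision radius, so $R(\varepsilon)$ is constant in $\bar{E}_s$. The whole dependence on energy therefore enters through the denominator $D_{s,\max}^2$.

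Next we express $D_{s,\max}^2$ through $\bar{E}_s$. For a square $M$-QAM alphabet with minimum spacing $2a$, the points are $a(2i-1-\sqrt{M})+\mathrm{j}\,a(2l-1-\sqrt{M})$. Under the uniform symbol-usage assumption, $\bar{E}_s=\tfrac{2}{3}(M-1)a^2$. The maximum distance is attained between opposite corner points, giving $D_{s,\max}^2=2\big(2a(\sqrt{M}-1)\big)^2=8a^2(\sqrt{M}-1)^2$. Eliminating $a^2$ yields
\begin{equation*}
D_{s,\max}^2=\frac{12(\sqrt{M}-1)^2}{M-1}\,\bar{E}_s=\frac{12(\sqrt{M}-1)}{\sqrt{M}+1}\,\bar{E}_s ,
\end{equation*}
which is linear in $\bar{E}_s$ with a positive constant $c_M$ depending only on $M$. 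More generally, any scaling of a fixed constellation shape has $D_{s,\max}^2\propto\bar{E}_s$, so the argument does not depend on the square geometry.

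Combining the two steps, $d_{H,\min}^{\mathrm{LB}}=\lceil 4R^2(\varepsilon)/(c_M\bar{E}_s)\rceil$. The argument $4R^2/(c_M\bar{E}_s)$ is strictly decreasing in $\bar{E}_s>0$, and $\lceil\cdot\rceil$ is nondecreasing. Their composition is therefore monotonically nonincreasing, stepwise, in $\bar{E}_s$, which is the claim. Here ``monotonically decreasing'' must be read in the non-strict, step-function sense used in Lemma~\ref{lemm:d_min^max_n}.

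The only real obstacle is making the proportionality $D_{s,\max}^2\propto\bar{E}_s$ explicit. We also need to state clearly that the constellation shape is held fixed while its scale varies, since otherwise ``fixed $M$'' would not pin down $D_{s,\max}$. Once that is in place, the remaining steps are immediate.
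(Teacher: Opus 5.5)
Your proposal is correct and follows essentially the same route as the paper: $R(\varepsilon)$ does not depend on $\bar{E}_s$, and $D_{s,\max}^2=12(\sqrt{M}-1)\bar{E}_s/(\sqrt{M}+1)$ is strictly increasing, so the argument of the ceiling decreases and $d_{H,\min}^{\mathrm{LB}}$ is nonincreasing. Your explicit derivation of $D_{s,\max}^2$ from the square-grid geometry, which the paper only states, is a useful addition, as is your remark that the ceiling gives monotonicity only in the non-strict, stepwise sense; the paper's ``hence so is'' glosses over that point.
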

\begin{proof}
From~\eqref{eq:d_min^min}, $d_{H,\min}^{\mathrm{LB}}=\lceil 4R^2(\varepsilon)/D_{s,\max}^2\rceil$. Since
$D_{s,\max}^2=12(\sqrt{M}-1)\bar{E}_s/(\sqrt{M}+1)$ is strictly increasing in $\bar{E}_s$ while $R(\varepsilon)$ is
independent of $\bar{E}_s$, the argument $4R^2(\varepsilon)/D_{s,\max}^2$ is strictly decreasing
in $\bar{E}_s$, and hence so is $d_{H,\min}^{\mathrm{LB}}$.
\end{proof}

\begin{theorem}
\label{theo:E_0,i}
For fixed $(M, k, n, \sigma, \varepsilon)$, the upper bound 
$P_{\mathrm{con}}^{\mathrm{UB}}$ is piecewise continuous in $\bar{E}_s$, with jump discontinuities at
    $\bar{E}_{s,i} = \frac{(\sqrt{M}+1)\,R^2(\varepsilon)}{3(\sqrt{M}-1)\,i}, i \in \mathbb{N}^+$ and is monotonically decreasing and convex within each continuity interval.
\end{theorem}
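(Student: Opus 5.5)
The plan is to separate the two ways $\bar{E}_s$ enters $P_{\mathrm{con}}^{\mathrm{UB}}$ in \eqref{eq:P_con^UB}: through the summation floor $d_{H,\min}^{\mathrm{LB}}$, and through the per-shell term $\mathbb{E}[P_{\mathrm{pair}}(D)\mid d]$. The combinatorial weights $A_d=\binom{n}{d}(M-1)^d/M^{n-k}$ do not depend on $\bar{E}_s$.

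\emph{Locating the jumps.} By the preceding lemma, $d_{H,\min}^{\mathrm{LB}}=\lceil 4R^2(\varepsilon)/D_{s,\max}^2\rceil$, with $D_{s,\max}^2=12(\sqrt{M}-1)\bar{E}_s/(\sqrt{M}+1)$. Hence $d_{H,\min}^{\mathrm{LB}}$ is a nonincreasing integer step function of $\bar{E}_s$. It changes value exactly where $4R^2/D_{s,\max}^2=i\in\mathbb{N}^+$. Solving gives
\begin{equation*}
\bar{E}_s=\frac{(\sqrt{M}+1)R^2(\varepsilon)}{3(\sqrt{M}-1)\,i}=\bar{E}_{s,i},
\end{equation*}
since $R(\varepsilon)$ is independent of $\bar{E}_s$. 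On each open interval $(\bar{E}_{s,i+1},\bar{E}_{s,i})$ the floor is constant, so the sum has a fixed index set. Crossing $\bar{E}_{s,i}$ adds or removes a whole shell term $A_d\,\mathbb{E}[P_{\mathrm{pair}}\mid d]$, which is strictly positive. This gives the jump discontinuities, provided each term is continuous in $\bar{E}_s$.

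\emph{Per-term behaviour.} For square QAM, every symbol difference scales as $\sqrt{\bar{E}_s}$. So conditioned on $d$, the squared distance is $D^2=\bar{E}_s\,S_d$, where $S_d=\sum_{j=1}^d|\tilde x_j-\tilde x_j'|^2$ is the same sum for the unit-energy constellation. The law of $S_d$ (the $d$-fold convolution of the per-symbol squared-distance distribution) is independent of $\bar{E}_s$. Therefore
\begin{equation*}
\mathbb{E}[P_{\mathrm{pair}}(D)\mid d]=\mathbb{E}_{S_d}\big[P_{\mathrm{pair}}(\sqrt{\bar{E}_s S_d})\big].
\end{equation*}
Monotonic decrease then follows pointwise in $S_d$ from Lemma~\ref{lem:lemma2}, because $\sqrt{\bar{E}_s S_d}$ increases in $\bar{E}_s$. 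Continuity follows by dominated convergence, since $0\le P_{\mathrm{pair}}\le 1$.

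\emph{Convexity, the main obstacle.} Convexity of $P_{\mathrm{pair}}$ in $D$ combined with the concave map $\bar{E}_s\mapsto\sqrt{\bar{E}_s S_d}$ does not automatically give convexity in $\bar{E}_s$. I would use the chain rule. Write $g(E)=P_{\mathrm{pair}}(\sqrt{ES})$, so that
\begin{equation*}
g''(E)=\frac{S}{4E}\Big(P_{\mathrm{pair}}''(D)-\frac{P_{\mathrm{pair}}'(D)}{D}\Big)\Big|_{D=\sqrt{ES}}.
\end{equation*}
Since $P_{\mathrm{pair}}'\le0$ and $P_{\mathrm{pair}}''\ge0$ by Lemma~\ref{lem:lemma2}, both contributions are nonnegative, so $g''\ge0$. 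Convexity is preserved under the expectation over $S_d$ and under the nonnegative $A_d$-weighted finite sum.

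The delicate point is that Lemma~\ref{lem:lemma2} only holds for $D\ge 2R$. Pairs with small $S_d$ in a shell $d\ge d_{H,\min}^{\mathrm{LB}}$ might have $\sqrt{\bar{E}_s S_d}<2R$. I would first try to argue that the admissible region $D\ge 2R$ is exactly the one enforced by the non-overlap constraint underlying $d_{H,\min}^{\mathrm{LB}}$, so the expectation effectively ranges over admissible distances. Failing that, I would directly establish monotonicity and convexity of $P_{\mathrm{pair}}$ in $D$ from \eqref{eq:P_pair} on the full range, in the same way as \cite{han2025confusions} Lemma 2, which is the only genuinely new analytic step.
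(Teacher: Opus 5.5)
Your plan follows the paper's proof essentially step for step. The ceiling staircase of $d_{H,\min}^{\mathrm{LB}}$ locates the jumps at $\bar{E}_{s,i}$. The scaling $D=\sqrt{\bar{E}_s}\,W_d$ (your $S_d=W_d^2$), with an $\bar{E}_s$-free law, turns each shell term into $\mathbb{E}[P_{\mathrm{pair}}(W_d\sqrt{\bar{E}_s})]$. The pointwise chain rule $P_{\mathrm{pair}}''(D')^2+P_{\mathrm{pair}}'D''\geqslant0$, preserved under expectation and the nonnegative $A_d$-sum, then gives monotonicity and convexity.

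Your hesitation about composing with the concave map is settled by the standard rule that a convex \emph{non-increasing} function of a concave map is convex; the paper phrases it that way, and your chain-rule computation is that rule written out.

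The ``delicate point'' you raise is a genuine gap, and the paper does not close it. It applies Lemma~\ref{lem:lemma2} to every realized value of $W_d$, i.e.\ it tacitly takes your route (a). But (a) is false for the object in \eqref{eq:P_con^UB}. The floor $d_{H,\min}^{\mathrm{LB}}$ only guarantees that the \emph{largest} distance in the shell, $\sqrt{d}\,D_{s,\max}$, reaches $2R$. The all-nearest-neighbour pair in the same shell has $D^2=dD_{s,\min}^2$, smaller by the factor $D_{s,\max}^2/D_{s,\min}^2=2(\sqrt{M}-1)^2$ ($18$ for 16-QAM, $98$ for 64-QAM). For shells near the floor this puts it far below $4R^2$, and the $d$-fold convolution law gives it positive probability. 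Truncating the shell law to $D\geqslant2R$ does not rescue the theorem either. Since $W_d$ is discrete, each atom $w$ crossing $\sqrt{\bar{E}_s}\,w=2R$ switches on a term $\Pr(W_d=w)\,P_{\mathrm{pair}}(2R)>0$, creating upward jumps away from the $\bar{E}_{s,i}$.

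Route (b) is closed as well. Your chain rule needs $P_{\mathrm{pair}}''(D)-P_{\mathrm{pair}}'(D)/D\geqslant0$, i.e.\ convexity of $P_{\mathrm{pair}}$ in $D^2$. Write $P_{\mathrm{pair}}$ as the noncentral-$\chi^2_{2n}$ CDF at $x=R^2/\sigma^2$, namely $\sum_{j\geqslant0}p_j(\mu)F_{\chi^2_{2n+2j}}(x)$ with $p_j(\mu)=e^{-\mu}\mu^j/j!$ and $\mu=D^2/(2\sigma^2)$. Using $F_{\chi^2_{2m}}(x)-F_{\chi^2_{2m+2}}(x)=q_m\triangleq e^{-x/2}(x/2)^m/m!$, one gets $\mathrm{d}P_{\mathrm{pair}}/\mathrm{d}\mu=-\sum_j p_j(\mu)\,q_{n+j}<0$. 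At $\mu=0$ one also gets $\mathrm{d}^2P_{\mathrm{pair}}/\mathrm{d}\mu^2=q_n-q_{n+1}$, which is negative as soon as $x/2>n+1$. This always holds at $\varepsilon=0.05$, because the 95\% quantile of $\chi^2_{2n}$ exceeds $2n+2$.

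The first identity shows what does survive. $P_{\mathrm{pair}}$ is decreasing on all of $[0,\infty)$, so the monotonicity step (yours and the paper's) is valid without the restriction $D\geqslant2R$. Continuity is immediate, since each shell law is finitely supported. Convexity within each interval, however, would need a quantitative argument that the positive curvature carried by the bulk of the $A_d$-weighted shell laws outweighs the negative curvature of the small-distance configurations. Neither your proposal nor the paper's proof supplies one.

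A minor point: genuine jumps occur only for $i\leqslant n$. For $i>n$ the sum in \eqref{eq:P_con^UB} is empty on both sides of $\bar{E}_{s,i}$.
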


\begin{proof}
See Appendix \ref{appendix:proof_E_0,i}.
\end{proof}

\subsection{Upper Bound versus Blocklength}
\begin{lemma}
\label{lem:d_min^min_n}
For fixed $(M, k, \bar{E}_s, \sigma, \varepsilon)$, $d_{H,\min}^{\mathrm{LB}}$ is monotonically increasing in $n$.
\end{lemma}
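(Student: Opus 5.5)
The statement reduces to monotonicity of the decision radius in $n$. By \eqref{eq:d_min^min}, $d_{H,\min}^{\mathrm{LB}}=\lceil 4R^2(\varepsilon)/D_{s,\max}^2\rceil$. With $(M,\bar{E}_s)$ fixed, $D_{s,\max}^2=12(\sqrt{M}-1)\bar{E}_s/(\sqrt{M}+1)$ does not depend on $n$. Since $\lceil\cdot\rceil$ is nondecreasing, it suffices to show that $R(\varepsilon)=\sigma F^{-1}_{\chi_{2n}}(1-\varepsilon)$ is increasing in $n$ for fixed $(\sigma,\varepsilon)$. This is the same as saying the $(1-\varepsilon)$-quantile of the $\chi_{2n}$ law grows with $n$.

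To prove the quantile claim, I would argue stochastic dominance. If $\mathbf{w}_n\sim\mathcal{CN}(\mathbf{0},2\sigma^2I_n)$, then $\mathbf{w}_{n+1}$ can be realized as $(\mathbf{w}_n,w_{n+1})$ with $w_{n+1}$ independent of $\mathbf{w}_n$. Hence $\|\mathbf{w}_{n+1}\|^2=\|\mathbf{w}_n\|^2+|w_{n+1}|^2\geqslant\|\mathbf{w}_n\|^2$ pointwise, and the inequality is strict almost surely. It follows that $F_{\chi_{2n+2}}(x)\leqslant F_{\chi_{2n}}(x)$ for every $x$, and the inequality is strict for $x>0$.

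An analytic alternative uses the identity $P(n+1,x)=P(n,x)-x^ne^{-x}/\Gamma(n+1)<P(n,x)$. This gives the same pointwise ordering of the CDFs, with $F_{\chi_{2n}}(x)=P(n;x^2/2)$.

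Pointwise ordering of continuous, strictly increasing CDFs carries over to their inverses. Concretely, set $q_n=F^{-1}_{\chi_{2n}}(1-\varepsilon)$. Then $F_{\chi_{2n+2}}(q_n)<F_{\chi_{2n}}(q_n)=1-\varepsilon$, so $q_{n+1}>q_n$. Therefore $R(\varepsilon)$ is strictly increasing in $n$.

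Finally, composing with the nondecreasing ceiling gives that $d_{H,\min}^{\mathrm{LB}}$ is monotonically (nondecreasingly) increasing in $n$. In general it is a step function that stays constant over stretches of $n$, in the same sense as Lemma~\ref{lemm:d_min^max_n}.

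The only point that needs care is the quantile inversion: it requires strict monotonicity and continuity of $F_{\chi_{2n}}$ on $(0,\infty)$, which hold because the density is positive there. The remaining steps are routine.
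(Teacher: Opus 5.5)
Your proof is correct and follows the paper's argument: $D_{s,\max}^2$ does not depend on $n$, $R(\varepsilon)$ increases with $n$ because the $(1-\varepsilon)$-quantile of $\chi_{2n}$ grows with the degrees of freedom, and the ceiling preserves this monotonicity. Your stochastic-dominance argument (and the incomplete-gamma recurrence alternative) supplies the quantile growth that the paper only asserts, and your note that $d_{H,\min}^{\mathrm{LB}}$ is nondecreasing rather than strictly increasing is a useful precision.
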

\begin{proof}
From~\eqref{eq:d_min^min}, $d_{H,\min}^{\mathrm{LB}}=\lceil 4R^2(\varepsilon)/D_{s,\max}^2\rceil$. The decision
radius $R(\varepsilon)=\sigma F^{-1}_{\chi_{2n}}(1-\varepsilon)$ is increasing in $n$, since the
$(1-\varepsilon)$ quantile of $\chi_{2n}$ grows with its degrees of freedom, while $D_{s,\max}^2$ is
independent of $n$. Hence the argument $4R^2(\varepsilon)/D_{s,\max}^2$ is increasing in $n$, and so is
$d_{H,\min}^{\mathrm{LB}}$.
\end{proof}

\begin{theorem}
\label{theo:P_con^UB_n}
For fixed $(M,k,\bar{E}_s,\sigma,\varepsilon)$, the upper bound
$P_{\mathrm{con}}^{\mathrm{UB}}$ is \revise{piecewise continuous in $n$ and globally
monotonically decreasing for sufficiently large $n$.}{monotonically decreasing in $n$ for
sufficiently large $n$.}
\end{theorem}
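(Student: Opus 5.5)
Our approach is to write $P_{\mathrm{con}}^{\mathrm{UB}}(n)=\sum_{d=d_{H,\min}^{\mathrm{LB}}(n)}^{n}A_d(n)\,\mathbb{E}[P_{\mathrm{pair}}(D)\mid d]$ with $A_d(n)=\binom{n}{d}(M-1)^dM^{k-n}$. We then show that the whole sum decays in $n$ once $n$ is large. The sum has three sources of $n$-dependence: the population $A_d(n)$, the pairwise term through $R(\varepsilon)$ and the $\chi_{2n}$ law in $P_{\mathrm{pair}}$, and the summation floor $d_{H,\min}^{\mathrm{LB}}(n)$.

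By Lemma~\ref{lem:d_min^min_n}, the floor is nondecreasing in $n$. Raising the floor only removes nonnegative terms, so it can only help monotonicity. We therefore compare $P_{\mathrm{con}}^{\mathrm{UB}}(n+1)$ with $P_{\mathrm{con}}^{\mathrm{UB}}(n)$ at a common floor and treat the floor jumps as a favourable correction.

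\textbf{Step 1: upper-bound each term by a Chernoff/Gaussian tail.} For each $d$ we use the bracket~\eqref{eq:Epair_bracket}: $\mathbb{E}[P_{\mathrm{pair}}\mid d]\leqslant P_{\mathrm{pair}}(\sqrt{D_{s,\min}^2 d})$. Since $B_R(\mathbf{x}'-\mathbf{x})$ lies in the half-space $\{\mathbf{w}:\langle \mathbf{w},\mathbf{u}\rangle\geqslant D-R\}$, with $\mathbf{u}$ the unit vector toward $\mathbf{x}'-\mathbf{x}$, we get $P_{\mathrm{pair}}(D)\leqslant Q((D-R)/\sigma)$ whenever $D>R$. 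Summing the populations, $\sum_d A_d(n)=M^k$ is bounded independently of $n$.

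\textbf{Step 2: asymptotics.} As $n\to\infty$, $R(\varepsilon)^2/\sigma^2=2n+O(\sqrt n)$ by the CLT for $\chi^2_{2n}$. The floor is $d_{H,\min}^{\mathrm{LB}}\approx 8n\sigma^2/D_{s,\max}^2$. If this exceeds $n$, the sum is empty and the bound is identically $0$, hence trivially nonincreasing. Otherwise the relevant $d$ are a linear fraction of $n$. Use the volume form $P_{\mathrm{pair}}(D)\leqslant \mathbb{P}(\|\mathbf{w}\|\geqslant D-R)$ together with the exact $\chi_{2n}$ density. Then the dominant exponent of $A_d\,\mathbb{E}[P_{\mathrm{pair}}\mid d]$ is $n\,[\,h(\delta)+\delta\ln(M-1)-\ln M + \text{(noise exponent)}\,]$ with $\delta=d/n$. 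We would show that this exponent is strictly negative uniformly over $\delta\in[\delta_{\mathrm{LB}},1]$. Then $P_{\mathrm{con}}^{\mathrm{UB}}(n)=e^{-n c+o(n)}$ for some $c>0$. To upgrade this to monotonicity, we compare consecutive $n$ term by term:
\[
A_d(n+1)/A_d(n)=\tfrac{n+1}{(n+1-d)M}.
\]
This ratio is less than $1$ for $d<(M-1)(n+1)/M$, echoing Lemma~\ref{lem:d_min^max(n)}. For the pairwise factor, we argue that $\partial_n$ of the Gaussian-type tail is negative in the regime where $D-R$ grows linearly in $n$.

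\textbf{Main obstacle.} The hard part is the terms with $d\geqslant (M-1)(n+1)/M$, where $A_d$ increases in $n$, together with the fact that $R$ also grows with $n$. The pairwise tail decay must then beat the combinatorial growth. We would first try to show that these high-$d$ terms are exponentially negligible relative to the rest. Their $D^2\geqslant D_{s,\min}^2 d$ is linear with a large slope, while the deficit $D-R\gtrsim(\sqrt{D_{s,\min}^2\delta}-\sqrt{2}\sigma)\sqrt n$ grows like $\sqrt n$ squared in the exponent. Hence their contribution decreases in $n$ at a rate $e^{-\Theta(n)}$ that dominates the polynomial ratio $A_d(n+1)/A_d(n)\leqslant (n+1)/M$. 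If this fails for small $\bar{E}_s/\sigma^2$, we fall back on the statement's \emph{sufficiently large $n$} clause. In that regime we either exhibit the empty-sum case above, or restrict to SNR where $D_{s,\min}^2\delta_{\mathrm{LB}}>2\sigma^2$ and the uniform negative exponent holds.
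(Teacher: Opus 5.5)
There is a genuine gap. Your reduction to a common floor and the empty-sum case are fine, but the step that is meant to give monotonicity fails.

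\textbf{The pairwise factor has the wrong sign.} In a term-by-term comparison of $P_{\mathrm{con}}^{\mathrm{UB}}(n+1)$ with $P_{\mathrm{con}}^{\mathrm{UB}}(n)$ at the same $d$, the law of $D$ given $d$ is the $d$-fold convolution of the per-symbol law and does not depend on $n$, whereas $R(n)$ grows. Hence $\mathbb{E}[P_{\mathrm{pair}}(D)\mid d]$ \emph{increases} from $n$ to $n+1$; the paper's own proof says exactly this. So does your surrogate $Q((D-R(n))/\sigma)$.

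Your claim that $\partial_n$ of the Gaussian-type tail is negative silently moves along $d=\delta n$, which is where $D-R$ grows. That is not the comparison that pairs with $A_d(n+1)/A_d(n)=\tfrac{n+1}{(n+1-d)M}$. Moreover, the sign of the $n$-derivative of an \emph{upper bound} says nothing about the exact term. An estimate $P_{\mathrm{con}}^{\mathrm{UB}}(n)\leqslant e^{-cn}$, which is all Steps 1--2 can deliver, does not imply eventual monotonicity. What you need is that $\tfrac{n+1}{(n+1-d)M}$ times a pairwise factor larger than one is below one on the shells carrying the mass of the sum. That requires two-sided exponential estimates of $\mathbb{E}[P_{\mathrm{pair}}\mid d]$ and a localization of the dominant shell fraction, and neither is in the proposal.

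\textbf{The main-obstacle paragraph repeats the conflation.} Terms of size $e^{-\Theta(n)}$ along $d=\delta n$ are not terms that decrease in $n$. At fixed $d>(M-1)(n+1)/M$ both factors grow, so you would have to show that their absolute increase is dominated by the absolute decrease of the dominant shells.

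\textbf{The fallback does not help.} Restricting $\bar{E}_s$ changes the statement. Also, since $\delta_{\mathrm{LB}}=8\sigma^2/D_{s,\max}^2$, your condition $D_{s,\min}^2\delta_{\mathrm{LB}}>2\sigma^2$ reads $D_{s,\max}^2<4D_{s,\min}^2$. This is independent of the SNR and false for every square QAM with $M\geqslant 16$ ($72$ vs.\ $16$ for unnormalized 16-QAM).

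\textbf{The obstacle cannot be removed without an extra hypothesis.} Suppose $(M-1)/M<8\sigma^2/D_{s,\max}^2<1$, a nonempty range of $\bar{E}_s/\sigma^2$.
\begin{itemize}
\item For large $n$, every surviving shell has $d>(M-1)(n+1)/M$. So every term grows at fixed $d$, and a new positive term at $d=n+1$ appears.
\item Meanwhile $d_{H,\min}^{\mathrm{LB}}(n)/n\to 8\sigma^2/D_{s,\max}^2<1$. This forces the floor to stay constant on infinitely many steps.
\item On those steps $P_{\mathrm{con}}^{\mathrm{UB}}$ strictly increases.
\end{itemize}
A correct argument must therefore assume an SNR condition under which the shells carrying the mass have a per-step ratio bounded below one. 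The paper's own proof is only the heuristic that the factor $M^{-(n-k)}$ dominates. It overlooks the growth of $\binom{n}{d}(M-1)^d$, so it does not supply the missing ingredient either.
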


\begin{proof}
As $n$ grows at fixed $k$, two effects compete. The decision radius
$R(n)=\sigma F^{-1}_{\chi_{2n}}(1-\varepsilon)$ increases, enlarging each
$\mathbb{E}[P_{\mathrm{pair}}\mid d]$ and raising $d_{H,\min}^{\mathrm{LB}}$ so that the
nearest shells progressively leave the sum.
Meanwhile the redundancy $r=n-k$ increases, suppressing every shell population
$A_d=\binom{n}{d}(M-1)^{d}/M^{n-k}$ by the factor $M^{-(n-k)}$. For
sufficiently large $n$ this exponential suppression dominates, so
$P_{\mathrm{con}}^{\mathrm{UB}}$ decreases globally.
\end{proof}

\section{Numerical Experiment Results} \label{sec:numerical}
We set $\varepsilon = 0.05$ and $\sigma^2 = 0.5$ throughout all experiments.
The simulation is conducted for square $M$-\ac{qam} constellations with $M \in
\{16, 64\}$. \revise{}{Throughout, we do not apply probabilistic shaping and assume the
constellation symbols are used uniformly, so that $\bar{\eta}=-1/(M-1)$ and
$\Delta=2M\bar{E}_s/(M-1)$. As $\bar{\eta}$ and $\Delta$ are constants once the
codebook is fixed, a non-uniformly used codebook would only
rescale $\Delta$ through $\bar{\eta}$, leaving the analysis and the qualitative
behavior unchanged.}
\begin{figure}[!htpb]
    \centering
    \begin{subfigure}[b]{0.40\textwidth}
        \centering
        \includegraphics[width=\linewidth]{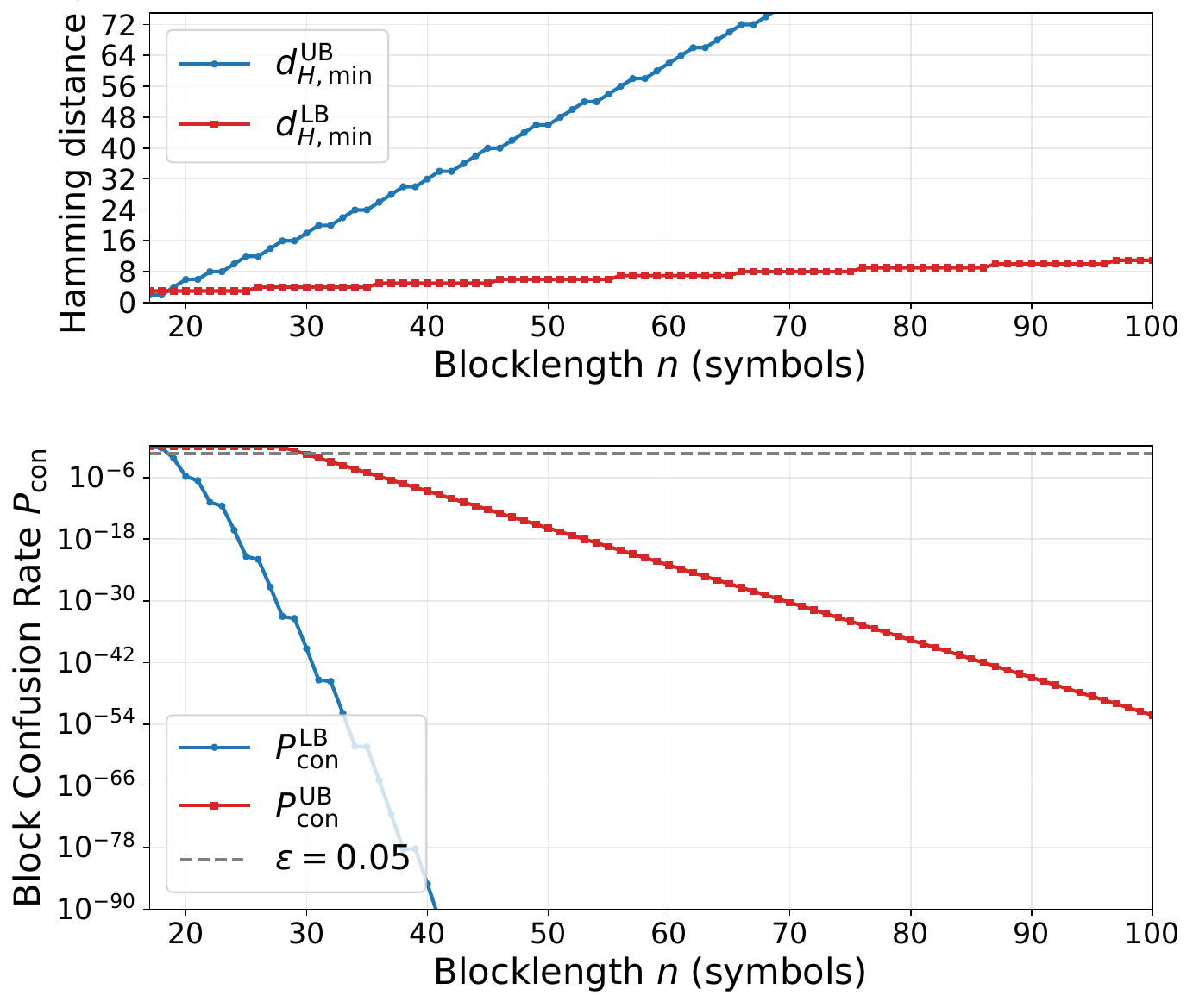}
        \caption{16-QAM.}
        \label{fig:P_con_n_16QAM}
    \end{subfigure}
    \hfill
    \begin{subfigure}[b]{0.40\textwidth}
        \centering
        \includegraphics[width=\linewidth]{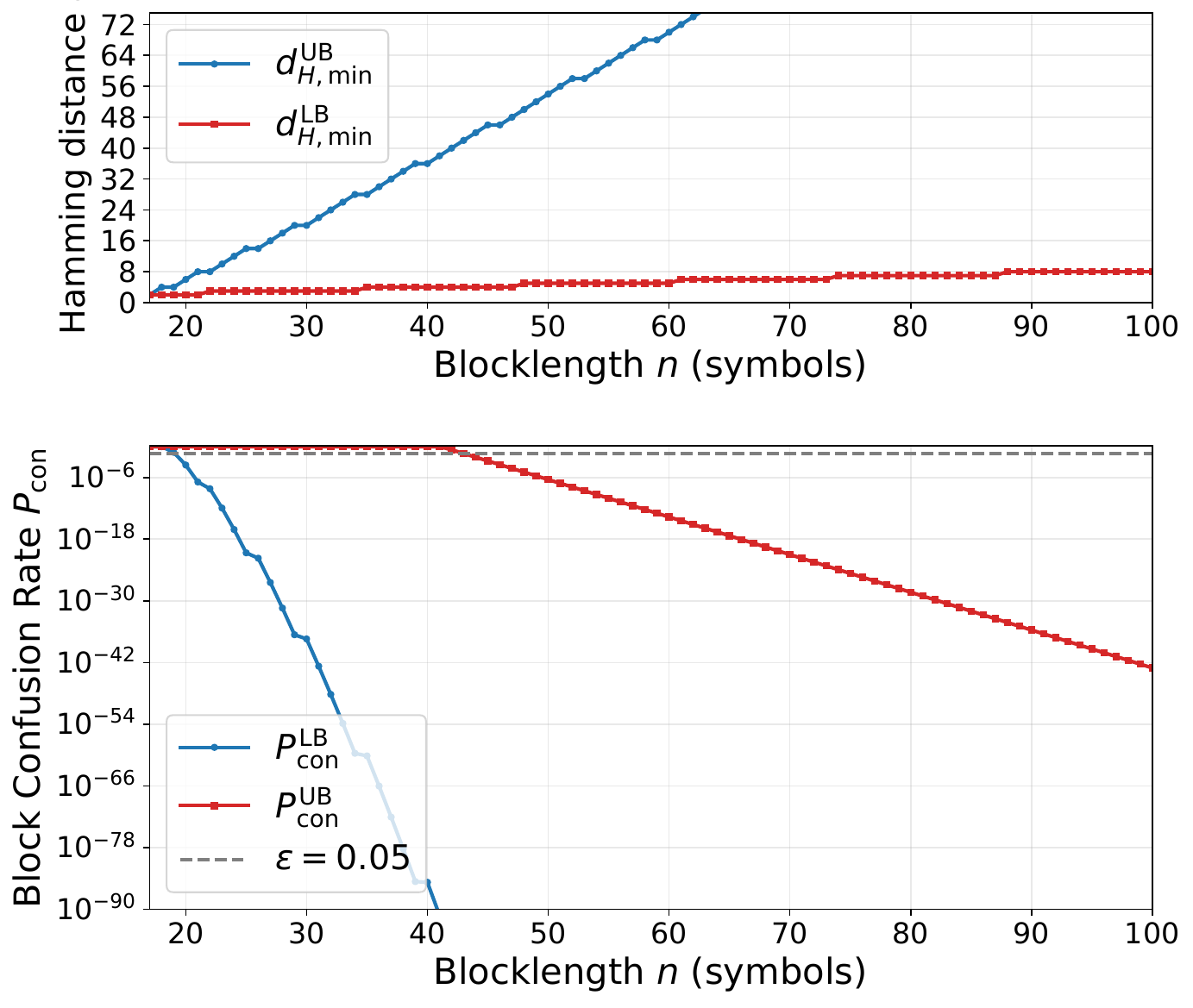}
        \caption{64-QAM.}
        \label{fig:P_con_n_64QAM}
    \end{subfigure}
    \caption{Bounds of block confusion rate versus blocklength with $k=16, E_s/N_0=8 \mathrm{dB}$.}
    \label{fig:P_con_n}
\end{figure}

\revise{First, $d_{H,\min}^{\mathrm{UB}}$ increases monotonically in $n$ (Lemma~\ref{lemm:d_min^max_n}), reflecting the growing packing capacity of the codebook
 as blocklength increases. Furthermore, $P_{\mathrm{con}}^{\mathrm{LB}}$ is piecewise decreasing within each $d_{H,\min}^{\mathrm{UB}}$-consistent interval and 
 becomes globally decreasing for large $n$, confirming Theorem~\ref{theo:P_con^LB_n}. In addition, $P_{\mathrm{con}}^{\mathrm{UB}}$ is piecewise continuous and 
 globally decreasing in $n$ (Theorem~\ref{theo:P_con^UB_n}). Both bounds remain substantially below $\varepsilon$, implying that at sufficiently low coding 
 rates the channel code itself suppresses confusion, rendering explicit error detection such as \ac{crc} redundant. This observation is particularly relevant 
 for 6G short-packet services, where removing \ac{crc} overhead from the payload can yield non-negligible throughput gains.}
{Fig.~\ref{fig:P_con_n} plots the confusion-rate bounds $P_{\mathrm{con}}^{\mathrm{LB}}$ and $P_{\mathrm{con}}^{\mathrm{UB}}$ versus 
blocklength $n$, for $k=16$ and $E_s/N_0=8$~dB, under 16- and 64-\ac{qam}. Each upper subplot shows 
the Hamming-distance limits $d_{H,\min}^{\mathrm{LB}}$ and $d_{H,\min}^{\mathrm{UB}}$. 
Both limits increase monotonically in $n$: $d_{H,\min}^{\mathrm{UB}}$ increases with the codebook's expanding packing capacity, and $d_{H,\min}^{\mathrm{LB}}$ 
increases because $R(\varepsilon)$ grows with $n$. The lower bound $P_{\mathrm{con}}^{\mathrm{LB}}$ 
is piecewise decreasing within each $d_{H,\min}^{\mathrm{UB}}$-consistent interval since adding redundancy at fixed $k$ makes the codebook sparser; and 
it decreases globally for large $n$. The upper bound $P_{\mathrm{con}}^{\mathrm{UB}}$ likewise 
decreases for large $n$, as the rising floor $d_{H,\min}^{\mathrm{LB}}$ drops the nearest shells. 
Higher-order constellations need more redundancy to bring $P_{\mathrm{con}}^{\mathrm{UB}}$ below $\varepsilon$.}

Fig.~\ref{fig:P_con_snr} plots the same two bounds as functions of the \ac{snr} $E_s/N_0$, for $(n,k)=(32,16)$ under 16-\ac{qam} and 64-\ac{qam}. 
As before, the main plot shows the two bounds and the upper subplot shows $d_{H,\min}^{\mathrm{LB}}$ and $d_{H,\min}^{\mathrm{UB}}$, delimiting the feasible 
and infeasible ranges. \revise{}{As the \ac{snr} grows, $d_{H,\min}^{\mathrm{UB}}$ remains constant since it depends only on $(n,k,M)$, while 
$d_{H,\min}^{\mathrm{LB}}$ decreases. Since larger symbol energy stretches the constellation, fewer differing symbols suffice to keep codewords 
at least $2R$ apart, and the feasible window widens.} Consistent with the analysis, $P_{\mathrm{con}}^{\mathrm{LB}}$ is monotonically decreasing and
convex in \ac{snr}, whereas $P_{\mathrm{con}}^{\mathrm{UB}}$ is piecewise continuous with jump discontinuities at $\bar{E}_{s,i}$, decreasing within
each interval and each jump marks an \ac{snr} at which a newly feasible nearest shell enters the sum. \revise{}{Higher \ac{snr} pushes codewords
apart from the fixed noise ball, so the pairwise confusion and both bounds fall.}

\revise{Comparing the two constellation orders, the upper bound decays more slowly and stays above that of 16-\ac{qam} for 64-\ac{qam}. 
This reveals a fundamental trade-off: higher-order constellations offer greater spectral efficiency but incur a higher confusion rate at a given \ac{snr}. 
For practical system design, these results suggest that when operating in the high-\ac{snr} regime with moderate blocklengths, even 64-\ac{qam} maintains 
confusion rates far below the \ac{bler} target, validating the block erasure channel abstraction employed by higher-layer protocols such as \ac{harq} and 
network coding.}{Comparing the two modulation orders, the 64-\ac{qam} bounds decay more slowly and stay above those of 16-\ac{qam}. Unlike the \ac{psk} 
results of \cite{han2025confusions}, the \ac{qam} bounds do not stay below the \ac{bler} constraint at low \ac{snr}. The confusion becomes negligible 
only at high \ac{snr}, requiring more transmit energy for higher-order constellations.} \revise{}{Since $P_{\mathrm{ers}}=\varepsilon-P_{\mathrm{con}}$, 
the confusion bounds translate directly into erasure-rate bounds. At high \ac{snr}, where $P_{\mathrm{con}}\ll\varepsilon$, the leakage is almost entirely 
detectable erasure, so the residual failures reduce to a block-erasure channel exploitable by \ac{harq} and network coding.}

\begin{figure}[!htpb]
    \centering
    \begin{subfigure}[b]{0.40\textwidth}
        \centering
        \includegraphics[width=\linewidth]{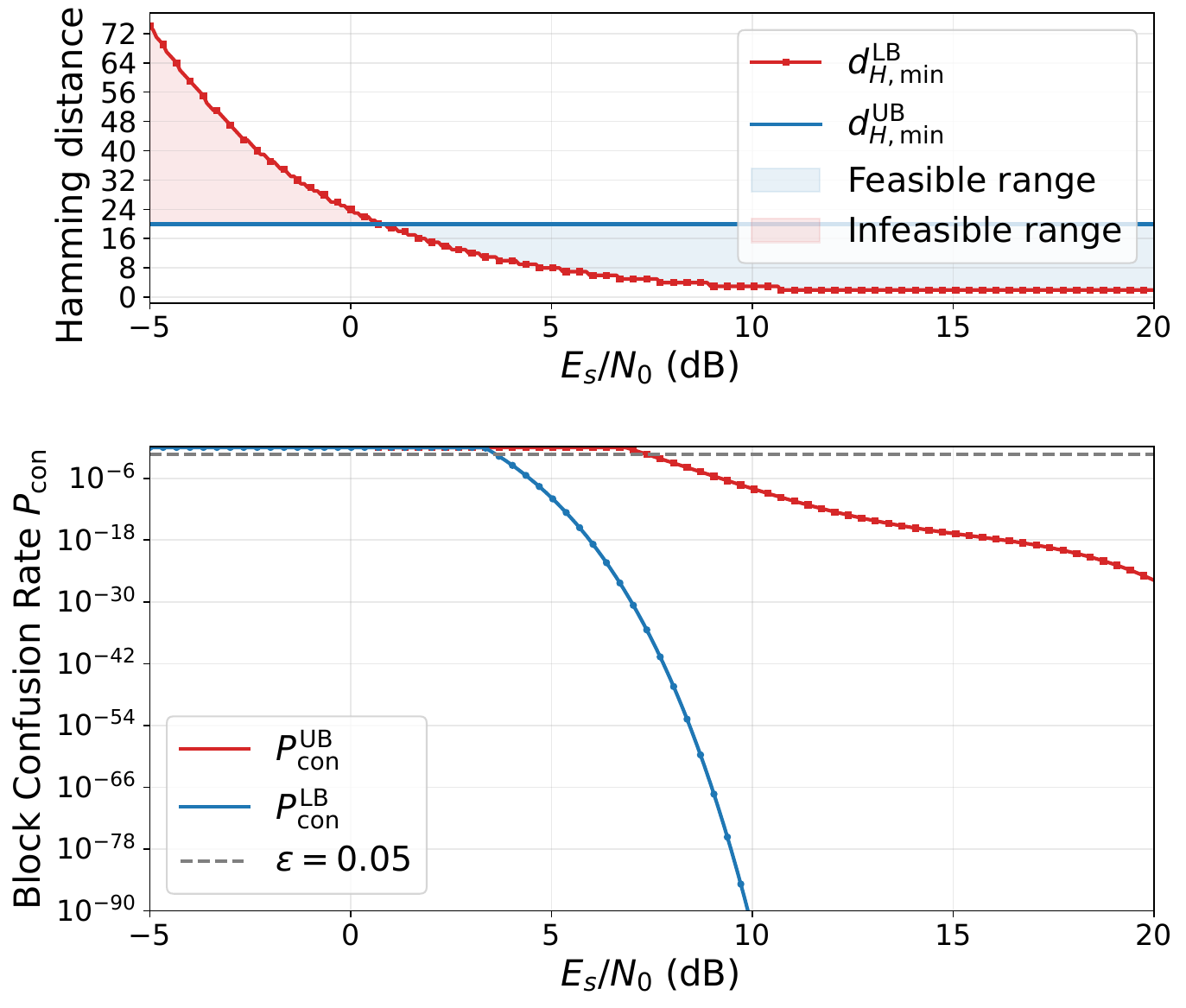}
        \caption{16-QAM.}
        \label{fig:P_con_snr_16QAM}
    \end{subfigure}
    \hfill
    \begin{subfigure}[b]{0.40\textwidth}
        \centering
        \includegraphics[width=\linewidth]{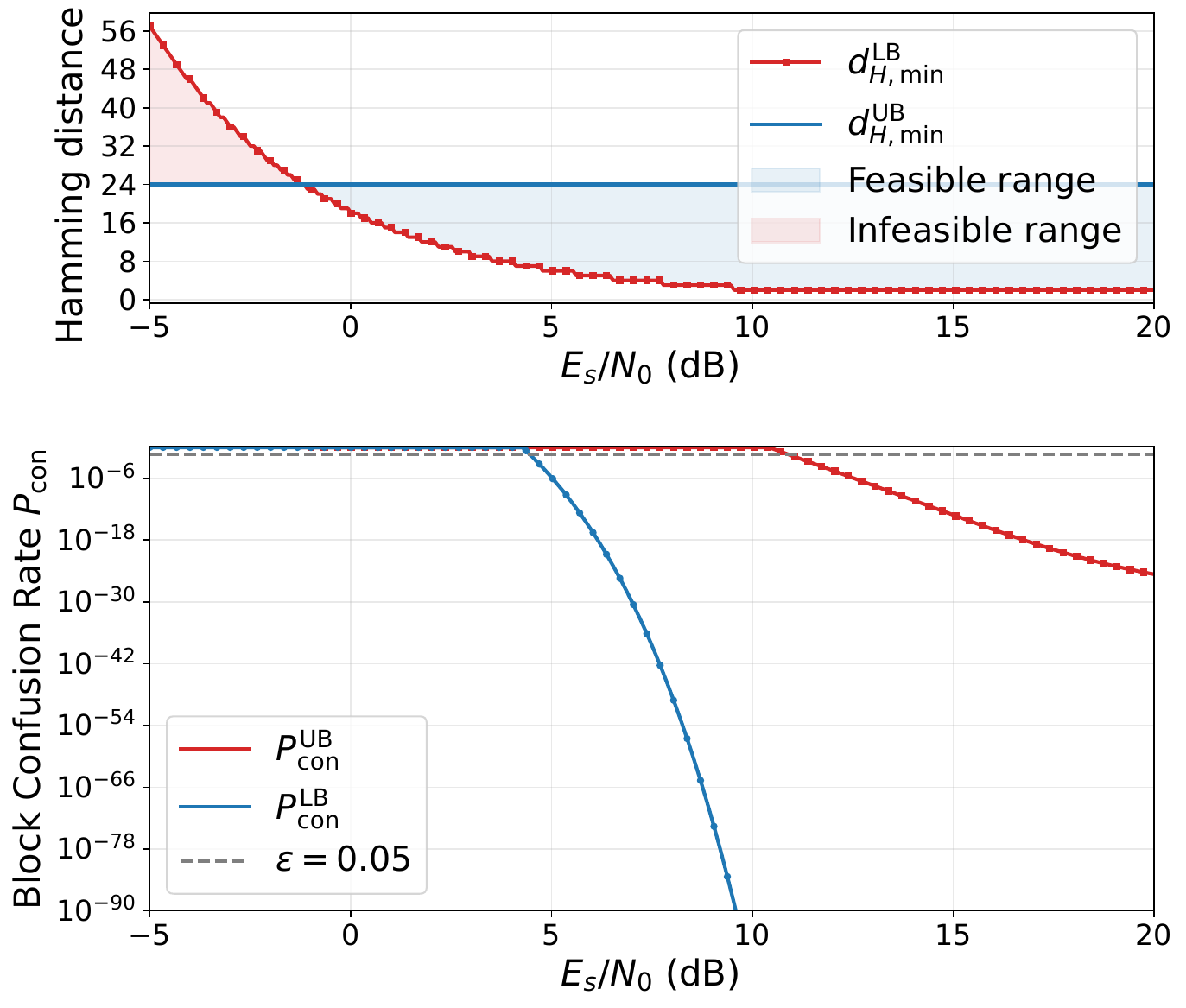}
        \caption{64-QAM.}
        \label{fig:P_con_snr_64QAM}
    \end{subfigure}
    \caption{Bounds of block confusion rate versus $E_s/N_0$ with $(32,16)$ codebooks.}
    \label{fig:P_con_snr}
\end{figure}

\section{Conclusion}\label{sec:conclusion} 
In this work, by introducing the per-symbol squared-distance coefficient,
analytical bounds on the block confusion rate were derived and shown
to be monotonically decreasing in both the average symbol energy and the blocklength, at a rate
governed by the constellation order. Numerical results confirmed that \revise{confusion rates 
remain many orders of magnitude below the \ac{bler} constraint, justifying}{as the \ac{snr} 
or redundancy increases, the confusion rate falls many orders of magnitude below the reliability 
target, leaving erasures as the dominant residual failure, which the code's inherent error detection
already flags, so a separate outer code such as \ac{crc} is no longer required.}

For future work, two directions are of interest. First, incorporating the actual distance spectrum 
of specific channel codes such as \ac{ldpc} or polar codes would tighten the bounds and tie them to deployed designs.
 Second, extending the framework to probabilistically shaped and non-square constellations would
broaden its applicability to practical modulation schemes.

\appendices
\linespread{1.0}\selectfont  

\section{Proof of Theorem \ref{theo:P_con_LB_E_0}}

\begin{proof}
\label{appendix:proof_P_con_LB_E_0}
For fixed $(M,k,n)$, with $\alpha\triangleq\Delta/\bar{E}_s=2(1-\bar{\eta})$ we have
\begin{align}
P_{\mathrm{con}}^{\mathrm{LB}}(\bar{E}_s)
&=
C\,P_{\mathrm{pair}}\!\left(\sqrt{\alpha d_{H,\min}^{\mathrm{UB}}}\,\sqrt{\bar{E}_s}\right).
\end{align}
By defining $\beta \triangleq \sqrt{\alpha d_{H,\min}^{\mathrm{UB}}}$, $g(D) \triangleq P_{\mathrm{pair}}(D)$.
Then we have:
\begin{align}
P_{\mathrm{con}}^{\mathrm{LB}}(\bar{E}_s)=C\,g(\beta\sqrt{\bar{E}_s}).
\end{align}
Since $C>0$, it suffices to study $f(\bar{E}_s) \triangleq g(\beta\sqrt{\bar{E}_s})$. Let $x(\bar{E}_s)=\beta\sqrt{\bar{E}_s}$. Then we can derive :
\begin{align}
x'(\bar{E}_s)
&=
\frac{\beta}{2\sqrt{\bar{E}_s}},
&
x''(\bar{E}_s)
&=
-\frac{\beta}{4\bar{E}_s^{3/2}}.
\end{align}
By Lemma \ref{lem:lemma2}, $g'(x)\le 0$ and $g''(x)\ge 0$ for all $x\ge 2R(\varepsilon)$.
Therefore we can obtain:
\begin{align}
f'(\bar{E}_s)
&=
g'(x(\bar{E}_s))\,x'(\bar{E}_s)
=
g'(\beta\sqrt{\bar{E}_s})\frac{\beta}{2\sqrt{\bar{E}_s}}
\le 0,
\end{align}
which proves that $f(\bar{E}_s)$, and hence $P_{\mathrm{con}}^{\mathrm{LB}}(\bar{E}_s)$, is
monotonically decreasing in $\bar{E}_s$. For the second derivative,
\begin{align}
f''(\bar{E}_s)
&=
g''(x(\bar{E}_s))\bigl(x'(\bar{E}_s)\bigr)^2
+
g'(x(\bar{E}_s))x''(\bar{E}_s) \\
&=
g''(\beta\sqrt{\bar{E}_s})\frac{\beta^2}{4\bar{E}_s}
-
g'(\beta\sqrt{\bar{E}_s})\frac{\beta}{4\bar{E}_s^{3/2}}.
\end{align}
Because $g''(\beta\sqrt{\bar{E}_s})\ge 0$ and $g'(\beta\sqrt{\bar{E}_s})\le 0$, both terms
on the right-hand side are nonnegative. Hence $f''(\bar{E}_s)\ge 0$.
Therefore $f(\bar{E}_s)$ is convex in $\bar{E}_s$, and so is
$P_{\mathrm{con}}^{\mathrm{LB}}(\bar{E}_s)$.
\end{proof}

\section{Proof of Lemma \ref{lem:d_min^max(n)}}
\begin{proof}
    \label{appendix:d_min^max(n)}
By the Plotkin bound for an $M$-ary code of length $n$ and size $M^k$, we can obatin $d_{H,\min}^{\mathrm{UB}}(n)\le\frac{n(M-1)M^{k-1}}{M^k-1}$.
Since $n \le M^k-2$, we have
$n+1 \le M^k-1$,
and therefore $\frac{n}{M^k-1} < \frac{n+1}{M^k}$.
Multiplying both sides by $(M-1)M^{k-1}$ gives
\begin{align}
\begin{split}
\frac{n(M-1)M^{k-1}}{M^k-1}
&<
\frac{(n+1)(M-1)M^{k-1}}{M^k} \\
&=
\frac{(M-1)(n+1)}{M}.
\end{split}
\end{align}
Combining this with the Plotkin bound yields $d_{H,\min}^{\mathrm{UB}}(n)
<
\frac{(M-1)(n+1)}{M}$.
\end{proof}

\section{Proof of Theorem \ref{theo:E_0,i}}

\begin{proof}
\label{appendix:proof_E_0,i}
From \eqref{eq:d_min^min} and $D_{s,\max}^2=12(\sqrt{M}-1)\bar{E}_s/(\sqrt{M}+1)$ we obtain:
\begin{align}
d_{H,\min}^{\mathrm{LB}}(\bar{E}_s)=
\left\lceil
\frac{(\sqrt{M}+1)R^2(\varepsilon)}
{3(\sqrt{M}-1)\bar{E}_s}
\right\rceil.
\end{align}
Hence $d_{H,\min}^{\mathrm{LB}}(\bar{E}_s)$ is an integer-valued staircase function of $\bar{E}_s$. Its discontinuities occur exactly when the argument of the ceiling operator is an integer, i.e.,
$\frac{(\sqrt{M}+1)R^2(\varepsilon)}
{3(\sqrt{M}-1)\bar{E}_s}=i,
\qquad i\in\mathbb N_+.$
Solving for $\bar{E}_s$ yields the jump locations
\begin{align}
\bar{E}_{s,i}
=
\frac{(\sqrt{M}+1)R^2(\varepsilon)}{3(\sqrt{M}-1)\,i},
\qquad i\in\mathbb N_+.
\end{align}
Every per-coordinate squared distance scales linearly with $\bar{E}_s$, so the
conditional distance obeys $D\mid d\propto\sqrt{\bar{E}_s}$ and
$\partial D/\partial\bar{E}_s=D/(2\bar{E}_s)>0$. In each interval
$(\bar{E}_{s,i+1},\bar{E}_{s,i})$ where $d_{H,\min}^{\mathrm{LB}}$ is constant, the derivative is
\begin{align}
\frac{\partial P_{\mathrm{con}}^{\mathrm{UB}}}{\partial \bar{E}_s}
=\sum_{d=d_{H,\min}^{\mathrm{LB}}}^{n}\binom{n}{d}\frac{(M-1)^{d}}{M^{n-k}}\,
\mathbb{E}\!\left[\frac{\partial P_{\mathrm{pair}}}{\partial D}\,
\frac{D}{2\bar{E}_s}\,\middle|\,d\right]<0,
\end{align}
since the integrand is negative pointwise ($P_{\mathrm{pair}}$ is decreasing while
$D/(2\bar{E}_s)>0$). Hence $P_{\mathrm{con}}^{\mathrm{UB}}$ is monotonically decreasing
in $\bar{E}_s$ within each interval of constant $d_{H,\min}^{\mathrm{LB}}$.

\revise{}{For convexity, conditioned on shell $d$ we have $D=\sqrt{\bar{E}_s}\,W_d$, where
$W_d\triangleq D/\sqrt{\bar{E}_s}$ is the random energy-normalized Euclidean distance of the shell:
since every symbol distance scales linearly with $\sqrt{\bar{E}_s}$, the scaling coefficient
$W_d$ depends only on the constellation shape and the per-shell distance distribution, and is
therefore independent of $\bar{E}_s$. Each shell term is then
$\mathbb{E}_{W_d}[P_{\mathrm{pair}}(W_d\sqrt{\bar{E}_s})]$. For a fixed value $w$ of $W_d$, with
$D=w\sqrt{\bar{E}_s}$ (hence $D'=w/(2\sqrt{\bar{E}_s})>0$
and $D''=-w/(4\bar{E}_s^{3/2})<0$),
\begin{align}
\frac{\partial^2}{\partial\bar{E}_s^2}P_{\mathrm{pair}}(w\sqrt{\bar{E}_s})
=P_{\mathrm{pair}}''(D)\,(D')^2+P_{\mathrm{pair}}'(D)\,D''\geqslant 0,
\end{align}
because $P_{\mathrm{pair}}$ is convex and non-increasing for
$D\geqslant 2R(\varepsilon)$ by Lemma~\ref{lem:lemma2}; equivalently, the convex non-increasing
$P_{\mathrm{pair}}$ composed with the concave $\sqrt{\bar{E}_s}$ is convex. Since
$P_{\mathrm{con}}^{\mathrm{UB}}$ is a nonnegative combination ($A_d\geqslant0$) of expectations of
such terms, it is convex in $\bar{E}_s$ within each interval of constant $d_{H,\min}^{\mathrm{LB}}$.}

\end{proof}

\bibliographystyle{IEEEtran}
\bibliography{references}

\end{document}

%% file: glossary.tex
\makeglossaries
\newacronym{awgn}{AWGN}{additive white Gaussian noise}
\newacronym{bec}{BEC}{binary erasure channel}
\newacronym{bler}{BLER}{block error rate}
\newacronym{cdf}{CDF}{cumulative distribution function}
\newacronym{cp}{CP}{control plane}
\newacronym{csi}{CSI}{channel state information}
\newacronym{csit}{CSIT}{channel state information at transmitter}
\newacronym{crc}{CRC}{cyclic redundancy check}
\newacronym{dft-s-ofdm}{DFT-s-OFDM}{Discrete Fourier Transform-spread-OFDM}
\newacronym{fbl}{FBL}{finite blocklength}
\newacronym{harq}{HARQ}{hybrid automatic repeat request}
\newacronym{gan}{GAN}{generative adversarial network}
\newacronym{ibl}{IBL}{infinite blocklength}
\newacronym{isac}{ISAC}{integrated sensing and communication}
\newacronym{xr}{XR}{extended reality}
\newacronym{ldpc}{LDPC}{low-density parity-check}
\newacronym{lfp}{LFP}{leakage-failure probability}
\newacronym{ml}{ML}{maximum likelihood}
\newacronym{map}{MAP}{maximum a posteriori}
\newacronym{mimo}{MIMO}{multi-input multi-output}
\newacronym{mm}{MM}{Majorize-Minimization}
\newacronym{noma}{NOMA}{non-orthogonal multi-access}
\newacronym{nom}{NOM}{non-orthogonal multiplexing}
\newacronym{ofdm}{OFDM}{orthogonal frequency-division multiplexing}
\newacronym{ofdma}{OFDMA}{orthogonal frequency-division multiple access}
\newacronym{qam}{QAM}{quadrature amplitude modulation}
\newacronym{oma}{OMA}{orthogonal multiple access}
\newacronym{papr}{PAPR}{Peak-to-Average Power Ratio}
\newacronym{per}{PER}{packet error rate}
\newacronym{phy}{PHY}{physical}
\newacronym{pls}{PLS}{physical layer security}
\newacronym{prb}{PRB}{physical resource block}
\newacronym{psk}{PSK}{phase-shift keying}
\newacronym{rms}{RMS}{root-mean-square}
\newacronym{sinr}{SINR}{signal-to-interference-and-noise ratio}
\newacronym{snr}{SNR}{signal-to-noise ratio}
\newacronym{tdma}{TDMA}{time-division multiple access}
\newacronym{up}{UP}{user plane}
\newacronym{urllc}{URLLC}{ultra-reliable low-latency communication}
\newacronym{fp}{FP}{fractional programming}